\title{An Event Structure Model for Probabilistic Concurrent Kleene Algebra}
\author{Annabelle McIver\inst{1}, Tahiry Rabehaja\inst{1,2} and Georg Struth\inst{2}}
\institute{Department of Computing\thanks{This research has been supported by the Australia Research Council Discovery Grant DP1092464 and the iMQRS Grant
from Macquarie University.}\\
Macquarie University, Sydney, Australia \\
\email{\{annabelle.mciver,tahiry.rabehaja\}@mq.edu.au}
\and Department of Computer Science \\
University of Sheffield,
United Kingdom \\
\email{g.struth@dcs.shef.ac.uk}
}
\newtheorem{rep@theorem}{\rep@title}
\newcommand{\newreptheorem}[2]{%
\newenvironment{rep#1}[1]{%
 \def\rep@title{#2 \ref{##1}}%
 \begin{rep@theorem}}%
 {\end{rep@theorem}}}
\newcommand{\pBES}{\mathbf{pBES}}
\newcommand{\BES}{\mathbf{BES}}
\newcommand{\C}{\mathcal{C}}
\renewcommand{\L}{\mathcal{L}}
\newcommand{\D}{\mathbb{D}}
\newcommand{\N}{\mathbb{N}}
\newcommand{\TT}{\mathcal{T}}
\newcommand{\EE}{\mathcal{E}}
\newcommand{\FF}{\mathcal{F}}
\newcommand{\G}{\mathcal{G}}
\newcommand{\PP}{\mathcal{P}}
\newcommand{\cfl}{\mathbf{cfl}}
\newcommand{\refby}{\sqsubseteq}
\newcommand{\prefix}{\trianglelefteq}
\newcommand{\init}{\mathbf{in}}
\renewcommand{\>}{\rangle}
\newcommand{\pc}[1]{{\ \oplus_{\!#1}\ }}
\newcommand{\supp}{\mathrm{supp}}
\newcommand{\ov}[1]{\overline{#1}}
\newcommand{\exit}{\Phi}
\begin{document}
\maketitle
\begin{abstract}
We give a new true-concurrent model for probabilistic concurrent Kleene algebra. The model is based on probabilistic event structures, which combines ideas from Katoen's work on probabilistic concurrency and Varacca's probabilistic prime event structures. The event structures are compared with a true-concurrent version of Segala's probabilistic simulation. Finally, the algebraic properties of the model are summarised to the extent that they can be used to derive techniques such as probabilistic rely/guarantee inference rules.
\end{abstract}

\section{Introduction}

The use of probability in concurrent systems has provided solutions to many problems where non-probabilistic techniques would fail~\cite{Rab76}. However, the combination of probability and concurrency increases the complexity of any formal tool powerful enough to ensure the correctness of a system involving both features. It is then imperative that such a framework should be as simple as possible and the use of algebras in formal verifications is indeed a step in that direction. In this paper, we follow an algebraic approach in the style of Hoare et al's concurrent Kleene algebra (CKA) that is sound under a true-concurrent interpretation~\cite{Hoa09}. The algebraic laws model the interactions between probability, nondeterminism, concurrency and finite iteration operators. The structure produces an algebra which is an important mathematical tool for carrying out complex verification tasks and can be used to give robust proofs of concurrent systems, and in particular for verification techniques such as Jones’ rely/guarantee rules~\cite{Hoa09,Jon12}. 

We have previously developed an interleaving model for probabilistic concurrent Kleene algebra (pCKA) that aims to combine probability and concurrency in a single algebraic setting~\cite{Rab13}.  Starting from the same set of axioms, we present a novel true-concurrent model based on bundle event structures (BES)~\cite{Kat96,Lan92}. Our motivation is that the concurrency operator of event structures provides a more faithful interpretation of concurrency found in physical systems. In contrast, the parallel composition of automata fails to capture some fundamental properties such as refinement of actions~\cite{Gor97}. Indeed, we show that our semantics distinguishes processes that are equal in the interleaving case. Event structures were introduced by Winskel~\cite{Win86} and have been studied extensively by others~\cite{Kat96,Lan92,Gla03,Gla09}, refined to bundle event structures by Langerak~\cite{Lan92} and extended to account for probabilistic specifications by Katoen~\cite{Kat96}. Katoen concentrated on event structures for probabilistic process algebras but did not provide the framework needed to compare different event structures. In contrast, Varacca studied the semantics of probabilistic prime event structures (pPES) using valuations on the set of configurations~\cite{Var03}. It is well known that prime event structures are not rich enough to express  the right factorisation of sequential composition through nondeterminism. Our true-concurrent model for pCKA requires a bundle event structure framework extended with  probabilistic simulations over the ``configuration-trees".

Our main contribution is the development of a new model for pCKA endowed with a true-concurrent version of Segala's probabilistic simulation~\cite{Seg95}. To the best of our knowledge, this is the first extension of probabilistic  simulation to the true-concurrent setting though non probabilistic versions do exist in the literature~\cite{Che92,Maj98}. We also define an adequate weakening of Katoen's techniques for pBES so that they reduce to Varacca's definitions for PES.

The paper is organised as follows. In Section~\ref{background}, we provide the necessary background for bundle event structures. The algebraic operators are defined in Section~\ref{operations} where a particular care is needed for the construction of the binary Kleene star. Without probability, we argue that bundle event structures endowed with these operators and quotiented with the pomset language equivalence forms a concrete model for CKA. In Section~\ref{pBES}, we set out the necessary tools for constructing pBES. In Section~\ref{simulation}, we define the notion of probabilistic simulation on pBES. Section~\ref{pcka} is devoted to showing that the set of pBES endowed with the defined algebraic operators modulo probabilistic simulation satisfies the axioms of pCKA. All incomplete proofs are given in full in the appendix.

\section{Bundle Event Structures}\label{background}

Event structures provide a truly concurrent denotation for processes where an event is labelled by an action from a set $\Sigma$. An event $e$ may enable another event $f$, that is, $f$ cannot happen unless $e$ has already happened. This relation, denoted by $\mapsto$, is useful for sequential dependency. It is also possible that two events cannot happen simultaneously in a single run which usually occurs when there is a nondeterministic choice of events. This second relation is denoted by $\#$ and is extended to sets of events $x,y\subseteq E$ such that $x\#y$ iff for all $e\in x$ and $f\in y$, if $e\neq f$ then $e\#f$. Formally, we have the following definition.
\begin{definition}[\cite{Lan92}]\label{def:bes}
A bundle event structure $\EE$ is a tuple $(E,\#,\mapsto,\lambda,\exit)$ such that $E$ is a set of events, $\#\subseteq E\times E$ is an irreflexive and symmetric binary relation (the conflict relation), $\mapsto\subseteq\PP(E)\times E$ is called a bundle relation where 
$$\forall x\subseteq E\ \forall e\in E: x\mapsto e\Rightarrow x\#x,$$
$\lambda:E\to\Sigma$ is a labelling (partial) function and $\Phi\subseteq E$ is a set of events such that $\Phi\#\Phi$. Elements of $\Phi$ are called final events and $\PP(E)$ is the powerset of $E$.
\end{definition}

In the bundle $x\mapsto e$, $x$ is referred to as a bundle set and the event $e$ is pointed by $x$. Since $x\#x$ holds for every $x$ such that $x\mapsto e$, it follows that exactly one event in $x$ must enable $e$ and such a unique event is required for each bundle set pointing to $e$ before it can happen. Given a set of events $x\subseteq E$, we denote by $\cfl(x) = \{e\in E\ |\ \exists e'\in x:e\#e'\}$ the set of events that are in conflict with some event in $x$. A set $x$ is called \textit{conflict free} if $\cfl(x)\cap x = \emptyset$. Unlabelled events happen without any noticeable internal nor external observable outputs. They are only used as ``delimiters".

A (finite) sequence of events $e_1e_2\cdots e_n$ from $E$ is called an \emph{event trace} if for every $i\geq 1$ and every bundle relation $y\mapsto e_i$, there exists $j<i$ such that $e_j\in y$ and $e_i\notin\cfl(\{e_1,\dots,e_{i-1}\})\cup\{e_1,\dots,e_{i-1}\}$.

\begin{definition}[\cite{Lan92}]\label{def:configuration}
A configuration is a subset $x\subseteq E$ such that $x=\{e_1,\dots,e_n\}$ for 
some event trace $e_1\cdots e_n$ referred to as a linearisation of $x$. The set of all configurations (reps. traces) of $\EE$ is denoted by $\C(\EE)$ (resp. $\TT(\EE)$).
\end{definition}

In the sequel we will need to describe the causal dependencies between events in more detail. To do this we associate a partial order with each configuration. 

A \emph{labelled partial order} (lposet) is a tuple $(x,\preceq,\lambda)$ where $(x,\preceq)$ is a poset and $\lambda:x\to\Sigma$. Unlabelled events of a lposet $u = (x,\preceq,\lambda)$ can be removed to obtain the sub-lposet $\hat u = (\hat{x},\preceq\!\!|_{\hat x},\lambda|_{\hat x})$ 
such that $\hat{x} = \{e\in x\ |\ \lambda(e) \textrm{ is defined}\}$ and where $\preceq\!\!|_{\hat x}$ and $\lambda|_{\hat x}$ are the respective restrictions of $\preceq$ and $\lambda$ to the set $\hat x$. 
A lposet $u = (x,\preceq_x,\lambda_x)$ \textit{implements} another lposet $v = (y,\preceq_y,\lambda_y)$ if there exists a label-preserving monotonic bijection $f:\hat{y}\to\hat{x}$ and we write $u\refby_s v$ or simply $x\refby_s y$ if no confusion arises ($s$ stands for subsumption~\cite{Gis88}).

Given an event trace $e_1\cdots e_n$ of a BES $\EE$, we denote by $\preceq_{e_1\cdots e_n}$ the reflexive transitive closure of the order  $\preceq$ of events in that sequence i.e. $e_1\preceq e_2, e_2\preceq e_3,\dots, e_{n-1}\preceq e_n$. The tuple $(\{e_1,\dots,e_n\},\preceq_{e_1\cdots e_n},\lambda|_{\{e_1,\dots,e_n\}})$ is a lposet. Let $x\in\C(\EE)$. We generate a lposet $(x,\preceq,\lambda)$ where $$\preceq=\bigcap_{x=\{e_1,\dots,e_n\}\wedge e_1\cdots e_n\in\TT(\EE)}\preceq_{e_1\cdots e_n}$$ and $\lambda$ is restricted to $x$. Intuitively, two events are incomparable iff neither has to happen before the other.

The set of lposets of $\EE$  is denoted $\L(\EE)$, that is, $\L(\EE) = \{(x,\preceq,\lambda)\ |\ x\in\C(\EE)\}$. Given two bundle event structures $\EE$ and $\FF$, it is well known that $\C(\EE) = \C(\FF)$ iff $\TT(\EE) = \TT(\FF)$ iff $\L(\EE) = \L(\FF)$~\cite{Kat96,Lan92}. We say that $(x,\preceq_x,\lambda_x)$ is a \emph{prefix} of $(y,\preceq_y,\lambda_y)$, written $(x,\preceq_x,\lambda_x)\prefix(y,\preceq_y,\lambda_y)$, if $x\subseteq y$ and $\lambda_y|_x = \lambda_x$ and $e\preceq_y e' \wedge e'\in x\Rightarrow e\in x\wedge e\preceq_x e'$. The next proposition shows that configurations inclusion characterises prefixing.

\begin{proposition}\label{pro:configuration-prefix}
Let $\EE$ be a BES. If $x,y\in\C(\EE)$ and $x\subseteq y$ then $(x,\preceq_x,\lambda_x)\prefix (y,\preceq_y,\lambda_y)$.
\end{proposition}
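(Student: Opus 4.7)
The plan is to unpack the definition of $\prefix$ clause by clause. The two conditions $x\subseteq y$ and $\lambda_y|_x=\lambda_x$ are immediate, the first being the hypothesis and the second holding because $\lambda_x$ and $\lambda_y$ are both obtained by restricting the single labelling function $\lambda$ of $\EE$. The real work is to establish the third condition, $e\preceq_y e'\wedge e'\in x\Rightarrow e\in x\wedge e\preceq_x e'$, and for this I would first prove an \emph{extension lemma}.

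The extension lemma states that every linearisation $e_1\cdots e_n$ of $x$ can be extended to a linearisation of $y$ by appending only events from $y\setminus x$. I would prove it by induction on $|y\setminus x|$. For the inductive step, pick any linearisation $g_1\cdots g_m$ of $y$ and let $k$ be the smallest index with $g_k\in y\setminus x$, so that $g_1,\dots,g_{k-1}\in x$. I would then check that $e_1\cdots e_n\,g_k$ is itself a trace of $\EE$: for every bundle $z\mapsto g_k$ the trace prefix $g_1\cdots g_{k-1}$ already supplies a witness $g_j\in z\cap x\subseteq\{e_1,\dots,e_n\}$; the event $g_k$ does not lie in $x$ by choice, and $g_k\notin\cfl(x)$ because $y$ is a configuration and hence conflict free. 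The induction hypothesis applied to $x\cup\{g_k\}\subseteq y$ then finishes the proof. This lemma is the main obstacle, and its delicate point is the bundle condition, which must be discharged by pooling witnesses produced by two unrelated linearisations, one of $y$ and one of $x$.

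With the extension lemma in place, the third clause is short. Fix an arbitrary linearisation $\sigma_x$ of $x$ and extend it to $\sigma_y=\sigma_x\cdot\tau$ with all events of $\tau$ in $y\setminus x$, so every event of $x$ precedes every event of $y\setminus x$ in $\sigma_y$. Assume $e\preceq_y e'$ with $e'\in x$. Since $\preceq_y$ is the intersection over all linearisations of $y$, we have $e\preceq_{\sigma_y} e'$; if $e$ lay in $y\setminus x$ then $e'$ would strictly precede $e$ in $\sigma_y$, a contradiction, so $e\in x$. Because $\sigma_x$ is an initial segment of $\sigma_y$, the order on $\{e,e'\}\subseteq x$ inherited from $\sigma_y$ coincides with that from $\sigma_x$, giving $e\preceq_{\sigma_x} e'$. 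Since $\sigma_x$ was arbitrary, $e\preceq_x e'$ follows from the intersection definition of $\preceq_x$.
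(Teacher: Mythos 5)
Your proof is correct and follows essentially the same route as the paper: the paper's key lemma on trace extension is exactly your extension lemma, and the concluding argument comparing linearisations of $x$ with their extensions to $y$ is the same. The only differences are cosmetic --- you prove the extension lemma by induction on $|y\setminus x|$ where the paper reorders a single linearisation of $y$ into an $x$-part followed by a $(y\setminus x)$-part, and you omit the paper's companion restriction lemma, which is only used for the inclusion $\preceq_x\subseteq\preceq_y\cap(x\times x)$ that the definition of $\prefix$ does not actually require.
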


\section{Basic Operations on Bundle Event Structures}\label{operations}

A concurrent quantale is a particular kind of concurrent Kleene  algebra~\cite{Hoa09}. It is composed of two quantales that interact via the interchange law~(\ref{eq:interchange-law}). In this section, we show that the set $\BES$  of bundle event structures endowed with the following operators and partial order forms a concurrent quantale.  This model is extended to capture probability in Section~\ref{pBES}. 
\subsubsection*{Basic BES:} we start by defining the basic BES corresponding to Deadlock, Skip and one step action.
\begin{itemize}
\item Deadlock is denoted by $0$ and is associated with the BES $(\emptyset,\emptyset,\emptyset,\emptyset,\emptyset)$.
\item Skip is denoted by $1$ and is associated with $(\{e\},\emptyset,\emptyset,\emptyset,\{e\})$.
\item Each $a\in\Sigma$ is associated with $(\{e_a\},\emptyset,\emptyset,\lambda(e_a) = a,\{e_a\})$, denoted by $a$.
\end{itemize}

We fix two BES  $\EE = (E,\#_\EE,\mapsto_\EE,\lambda_\EE,\Phi_\EE)$ and $\FF= (F,\#_\FF,\mapsto_\FF,\lambda_\FF,\Phi_\FF)$ such that $E\cap F = \emptyset$. This ensures that the disjoint union of two labelling functions is again a function. We define the set $\init(\EE)\subseteq E$ such that $e\in\init(\EE)$ iff there is no $x\subseteq E$ such that $x\mapsto e$. Events in $\init(\EE)$ are called initial events.

\subsubsection*{Concurrency, sequential composition and nondeterminism~\cite{Kat96}} are defined in Fig.~\ref{fig:operators}. The concurrent composition $\EE\|\FF$ is the disjoint union of $\EE$ and $\FF$ delimited by fresh ineffectual events. Notice there is no synchronisation in $\|$, this is because we are mainly interested in lock-free concurrencies in the style of~\cite{Hoa09,Jon12,Jon81,Din02}. A special event can however be introduced to force synchronisation~\cite{Kat96,Gor97} and most of the algebraic laws remain valid. For the sequential composition, new bundles of the form $\exit_\EE\mapsto e$ for every $e\in\init(\FF)$ are added to make sure that all events of $\EE$ precede all events of $\FF$. For nondeterminism, the property $\init(\EE)\#\init(\FF)$ is imposed so that the occurrence of any initial event of $\EE$ will block every events of $\FF$ from happening (and symmetrically). The choice is resolved as soon as one event from $\EE$ or $\FF$ happens.

\begin{figure}[!ht]
\hspace{-2.5mm}\begin{minipage}{.5\linewidth}
\subsubsection{Concurrency} $\EE\|\FF$:\\

\begin{itemize}
\item set of events: $E\cup F\cup\{e,f\}$,
\item conflicts: $\#_{\EE}\cup\#_\FF$,
\item bundles: $\mapsto_\EE\cup\mapsto_\FF\cup\{\{e\}\mapsto e'\ |\ e'\in\init(\EE)\cup\init(\FF)\}\cup\{\exit_\EE\mapsto f,\exit_\FF\mapsto f\}$,
\item labelling: $\lambda\cup\lambda'$,
\item final events: $\exit_{\EE\|\FF} = \{f\}$.
 \end{itemize} 
 where $e,f\notin E\cup F$.
\end{minipage}\hspace{5mm}
\begin{minipage}{.5\linewidth}
\subsubsection{Sequential composition} $\EE\cdot\FF$ :\\

\begin{itemize}
\item set of events: $E\cup F$,
\item conflicts: $\#_{\EE}\cup\#_\FF$,
\item bundles: $\mapsto_\EE\cup\mapsto_\FF\cup\{\exit_\EE\mapsto e\ |\ e\in\init(\FF)\}$,
\item labelling : $\lambda\cup\lambda'$,
\item final events: $\exit_{\EE\cdot\FF} = \exit_{\FF}$.
\end{itemize}
\end{minipage}

\subsubsection{Nondeterminism} $\EE+\FF$ :
\begin{itemize}
\item set of events: $E\cup F$,
\item conflicts: $\#_{\EE}\cup\#_\FF\cup\mathrm{sym}(\init(\EE)\times\init(\FF))\cup\mathrm{sym}(\exit_\EE\times\exit_\FF)$,
\item bundles: $\mapsto_\EE\cup\mapsto_\FF$,
\item labelling: $\lambda\cup\lambda'$,
\item final events: $\exit_{\EE+\FF} = \exit_\EE\cup\exit_\FF$.
\end{itemize}
where $\mathrm{sym}(x\times y) = (x\times y)\cup (y\times x)$ is the symmetric closure.
\caption{Definitions of $\EE\|\FF$, $\EE\cdot\FF$ and $\EE+\FF$.}\label{fig:operators}
\end{figure}

\subsubsection{The Kleene star} is defined by constructing a complete partial order on the set of BES. We define the order $\EE\prefix\FF$, which is the sub-BES relation, such that 

\begin{eqnarray}
E&\subseteq& F\label{eq:event-inclusion}\nonumber\\
\#_\EE& = &\#_\FF\cap (E\times E)\label{eq:conflic-restriction}\nonumber\\
\mapsto_\EE& \subseteq &\mapsto_\FF\label{eq:bundle-inclusion}\nonumber\\
x\mapsto_\FF e\wedge e\in E&\Rightarrow &x\subseteq E\wedge x\mapsto_\EE e\label{eq:bundle-restriction}\nonumber\\
\lambda_\EE & = &\lambda_\FF|_{E}\nonumber\label{eq:label-restriction}\\
\exit_\EE & = &\exit_\FF\cap E\nonumber\label{eq:final-restriction}
\end{eqnarray}

We use the following binding precedence: $* , \cdot , \|, +$. The probabilistic choice $\!\pc{\alpha}\!$ (defined later) and $+$ are unordered and are parsed using brackets.


\begin{proposition}\label{pro:omega-completeness}
$(\BES,\prefix)$ is an $\omega$-complete partially ordered set, that is, any countable ascending chain has a least upper bound in $\BES$.
\end{proposition}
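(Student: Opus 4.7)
The plan is to construct the candidate least upper bound as the componentwise union of the chain and then verify in three stages that it is (i) a well-formed BES, (ii) an upper bound of the chain, and (iii) below every other upper bound.

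Given an ascending chain $\EE_0\prefix\EE_1\prefix\cdots$ with $\EE_i=(E_i,\#_i,\mapsto_i,\lambda_i,\exit_i)$, I would set $\EE=(E,\#,\mapsto,\lambda,\exit)$ where each component is the union $\bigcup_i$ of the corresponding components. Well-formedness uses a standard ``go high enough in the chain'' argument: $\lambda$ is a function because the compatibility condition $\lambda_i=\lambda_j|_{E_i}$ for $i\leq j$ forces the $\lambda_i$'s to agree on overlaps; $\#$ is irreflexive and symmetric because each $\#_i$ is; a bundle $x\mapsto e$ comes from some $\EE_j$ with $x\#_j x$, so $x\# x$; and for distinct $e,e'\in\exit$, picking $j$ large enough so that $e,e'\in\exit_j$ (using $\exit_i\subseteq\exit_j$, which follows from $\exit_i=\exit_j\cap E_i$) gives $e\#_j e'\subseteq\#$.

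For (ii), the inclusions $E_i\subseteq E$, $\mapsto_i\subseteq\mapsto$ and so on are immediate. For the restriction identities $\#_i=\#\cap(E_i\times E_i)$, $\exit_i=\exit\cap E_i$ and $\lambda_i=\lambda|_{E_i}$, the non-trivial direction is ``$\supseteq$'': any witness is contributed by some $\EE_j$, and replacing $j$ by $\max(i,j)$ lets me apply $\EE_i\prefix\EE_j$ to pull it back into $\EE_i$. The same index-raising trick handles the backward-propagation condition ``$x\mapsto e\wedge e\in E_i\Rightarrow x\subseteq E_i\wedge x\mapsto_i e$'': if $x\mapsto_j e$ with $j\geq i$ and $e\in E_i$, then $\EE_i\prefix\EE_j$ forces $x\subseteq E_i$ and $x\mapsto_i e$.

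For (iii), given any upper bound $\FF$ of the chain, each of the six sub-BES conditions for $\EE\prefix\FF$ reduces directly to the corresponding conditions for $\EE_i\prefix\FF$: unions commute with the inclusions and a ``choose $i$ large enough'' argument handles the equalities and the backward bundle condition exactly as in (ii). I expect the only real obstacle to be bookkeeping, specifically the need to be careful that the backward bundle condition lifts correctly from the chain (since a single bundle $x\mapsto_\FF e$ with $e\in E$ must be witnessed inside some $\EE_i$ that already contains $e$), and this is precisely where the structure of $\prefix$ along the chain rather than mere componentwise inclusion is used.
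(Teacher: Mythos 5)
Your proposal is correct and takes essentially the same approach as the paper: the least upper bound is the componentwise union, and both the upper-bound property and leastness are verified by the same ``raise the index along the chain and pull back via $\prefix$'' arguments, with your explicit well-formedness check being a sound addition the paper leaves implicit. The only piece of the paper's proof you omit is the verification that $\prefix$ is in fact a partial order (reflexivity, antisymmetry, transitivity), which the statement also asserts but which the paper itself treats as routine.
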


\begin{proof}[Sketch]
The proof that $\prefix$ is a partial order amounts to checking reflexivity, antisymmetry and transitivity which is clear. As for $\omega$-completeness, given a countable increasing sequence of BES $\EE_0\prefix\EE_1\prefix\EE_2\prefix\cdots$, we construct a BES $\EE = (\cup_i E_i,\cup_i\#_i,\cup_i\mapsto_i,\cup_i\lambda_i,\cup\exit_i)$. We can show that $\EE$ is indeed the least upper bound w.r.t $\prefix$ of the countable sequence $(\EE_i)_i$.\qed
\end{proof}

Let $\EE,\FF$ be two BES. The Kleene product of $\EE$ by $\FF$, denoted by $\EE*\FF$, is the limit of the $\prefix$-increasing sequence of BES $$\FF\prefix \FF+\EE\cdot\FF\prefix \FF+\EE\cdot(\FF + \EE\cdot\FF)\prefix\cdots$$
where adequate events renaming are needed to ensure that the sequence of BES are syntactically similar (see Fig.~\ref{fig:kleene-series} for a concrete example). Equivalently, $\EE*\FF$ is the least fixed point of $\lambda X.\FF + \EE\cdot X$ in $(\BES,\prefix)$.
\begin{figure}
\begin{tiny}$$
\hspace{-12mm}
\xymatrix{&f_0}
\hspace{7mm}\prefix\hspace{5mm}
\xymatrix{
f_0\ar@{}[r]|\#&e_0\ar@{|->}[d] \\
&f_1 }
\hspace{7mm}\prefix\hspace{7mm}
\xymatrix{
f_0\ar@{}[r]|\#&e_0\ar@{|->}[d]\ar@{|->}[dr]& \\
&f_1\ar@{}|\#[r]& e_1\ar[d] \\
&&f_2}
$$
\end{tiny}
An arrow $\mapsto$ denotes a bundle relation and $\#$ is the conflict relation. The events $f_i$ are labelled by $b$ while the $e_i$s are labelled by $a$.
\caption{The first three terms in the construction of $a*b$.}\label{fig:kleene-series}
\end{figure} 
The unary Kleene star is obtained as usual by $\EE^* = \EE*1$. The main reason behind the use of the binary Kleene star~\cite{Fok94} is that the unary version introduces unwanted sequential compositions. For instance, in normal Kleene algebras, a while loop with body $\EE$ is encoded as $(e_g\cdot \EE)^*\cdot e_{\neg g}$ where $e_{g}$ (resp. $e_{\neg g}$) is the event associated with the guard. Hence by the interchange law~(\ref{eq:interchange-law}), $((e_g\cdot \EE)^*\cdot e_{\neg g}) \| a$ can behave as $(e_g\cdot \EE)^*\cdot a\cdot e_{\neg g}$ but we would assume that each $e_g$ and the corresponding $e_{\neg g}$ are checked simultaneously. Hence, we interpret a while loop as $(e_g\cdot\EE)*e_{\neg g}$.

For convenience, we denote each component of the above sequence by $\EE*_{\le 0}\FF = \FF$, $\EE*_{\le 1}\FF = \FF + \EE\cdot\FF$, $\EE*_{\le 2}\FF = \FF + \EE\cdot(\FF + \EE\cdot\FF)$,\dots. The following proposition ensures that these operators are well defined.

\begin{proposition}\label{pro:well-defined}
Let $\EE,\FF$ be BES. Then for every $\bullet\in\{+,\cdot,\|,*\}$  $\exit_{\EE\bullet\FF}\#\exit_{\EE\bullet\FF}$.
\end{proposition}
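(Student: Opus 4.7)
The plan is to verify the property operator by operator, unfolding the definitions of Fig.~\ref{fig:operators} and using the fact that $\Phi \# \Phi$ only requires \emph{distinct} pairs of final events to be in conflict (since $\#$ is irreflexive).

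For the three basic combinators the argument is direct. For $\EE \| \FF$, the exit set is the singleton $\{f\}$, so the condition is vacuous. For $\EE \cdot \FF$, the exit set equals $\exit_\FF$ and $\#_{\EE \cdot \FF} \supseteq \#_\FF$, so the property for $\FF$ lifts verbatim. For $\EE + \FF$, I would pick two distinct elements $e, f$ of $\exit_\EE \cup \exit_\FF$ and split into three cases: if both lie in $\exit_\EE$ (resp.\ $\exit_\FF$), conflict follows from the hypothesis on $\EE$ (resp.\ $\FF$); the mixed case where one lies in $\exit_\EE$ and the other in $\exit_\FF$ is discharged by the explicit summand $\mathrm{sym}(\exit_\EE \times \exit_\FF)$ added to $\#_{\EE+\FF}$.

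For the Kleene star I would argue by induction along the defining chain. Using the two cases $+$ and $\cdot$ just established, a straightforward induction on $n$ gives $\exit_{\EE *_{\le n} \FF}\, \#\, \exit_{\EE *_{\le n} \FF}$ for every $n$. To pass to the limit, I would invoke the construction from Proposition~\ref{pro:omega-completeness}: the exit set and the conflict relation of $\EE * \FF$ are formed as componentwise unions along the $\prefix$-chain. Hence any two distinct events $e, f$ of $\exit_{\EE * \FF}$ already appear together in some approximant $\EE *_{\le n} \FF$, and the inductive hypothesis yields $e \#_{\EE *_{\le n} \FF} f$, which is then absorbed into $\#_{\EE * \FF}$.

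The only slightly delicate point is that the approximants $\EE *_{\le n} \FF$ are only defined up to a renaming of events. One needs to fix a concrete presentation of the chain (using fresh disjoint copies of $E$ at each unfolding, as suggested by Fig.~\ref{fig:kleene-series}) so that the $\prefix$-inclusions hold on the nose and the union construction of Proposition~\ref{pro:omega-completeness} applies directly; beyond that the whole proof is essentially bookkeeping.
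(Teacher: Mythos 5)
Your proposal is correct and follows essentially the same route as the paper's proof: a case analysis over the four operators, with $\|$ and $\cdot$ immediate, $+$ discharged via the added summand $\mathrm{sym}(\exit_\EE\times\exit_\FF)$, and the Kleene star handled by observing that $\exit_{\EE*\FF}$ is the increasing union of the $\exit_{\EE*_{\le i}\FF}$ so that any pair of distinct final events already conflicts in some finite approximant. Your extra remarks on the non-mixed cases for $+$ and on fixing a concrete renaming of the approximants only make explicit what the paper leaves implicit.
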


\begin{proof}
We have $\exit_{\EE+\FF} = \exit_{\EE}\cup\exit_{\FF}$ and since $\exit_{\EE}\times\exit_{\FF}\subseteq\#_{\EE+\FF}$, it follows that $\exit_{\EE+\FF}\#_{\EE+\FF}\exit_{\EE+\FF}$. The result is clear for the case of $\EE\cdot\FF$ and $\EE\|\FF$ because $\exit_{\EE\cdot\FF}= \exit_\FF$ and $\exit_{\EE\|\FF} = \{f\}$ where $f$ is the fresh final event in the construction of $\EE\|\FF$. For the Kleene star, we have $\exit_{\EE*\FF} = \cup_i\exit_{\EE*_{\le i}\FF}$ (increasing union). Therefore, any pair of events  $(e,e')\in\exit_{\EE*_{\le i}\FF}\times\exit_{\EE*_{\le j}\FF}$ are mutually conflicting with respect to the conflict relation of $\EE*_{\le \max(i,j)}\FF$. 
\qed
\end{proof}

We end this section by observing that $(\BES,+,\cdot,\|,0,1)$ is a concurrent quantale where the operator $\bullet\in\{\cdot,\|\}$ is redefined so that $\EE\bullet 0 = 0\bullet \EE = 0$. Following Gischer~\cite{Gis88}, we define an order relation based on pomset language subsumption. Recall that a \emph{pomset} is an equivalence class of lposets w.r.t the equivalence relation generated by $\refby_s$. For finite lposets $u$ and $v$, we have $u\refby_s v$ and $v\refby_s u$ iff $\hat u$ is isomorphic to $\hat v$; hence our definition coincides with Gischer's. The equivalence class of a lposet $u$ is denoted by the totally labelled lposet $\hat u$. The pomset language of a BES $\EE$ is defined  by 
$$\{\hat v\ |\ \exists u\in\L(\EE):v\refby_s u\wedge v\textrm{ is a lposet}\}.$$ 
When a BES is considered modulo pomset language equivalence, we show that $(\BES,+,\cdot,0,1)$ and $(\BES,+,\|,0,1)$ are quantales, 
i.e., each structure is an idempotent
semiring, a complete lattice
under the natural order $\EE\leq \EE$ iff $\EE+\FF = \FF$ and 
the operator $\bullet\in\{\cdot,\|\}$ distributes over arbitrary suprema and infinima. The interchange law~(\ref{eq:interchange-law}) is ensured by the subsumption property. The following proposition essentially follows from Gischer's results~\cite{Gis88}. In fact, Gischer proves that the axioms of CKA without the Kleene star completely axiomatise the pomset language equivalence.

\begin{proposition}\label{pro:cka}
For each $\bullet\in\{\cdot,\|\}$, the structure $(\BES,+,\bullet,0,1)$ is a quantale under the pomset language equivalence.
\end{proposition}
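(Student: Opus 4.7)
The plan is to reduce the claim to Gischer's characterisation of pomset languages~\cite{Gis88} by showing that the BES operators $+$, $\cdot$ and $\|$ descend, modulo pomset language equivalence, to the standard union, concatenation and parallel composition on subsumption-closed pomset languages. Writing $P(\EE)$ for the pomset language of $\EE$ as defined in the text, the bulk of the work lies in establishing the equalities $P(0)=\emptyset$, $P(1)=\{\hat 1\}$, $P(\EE+\FF)=P(\EE)\cup P(\FF)$, $P(\EE\cdot\FF)=P(\EE)\cdot P(\FF)$ and $P(\EE\|\FF)=P(\EE)\| P(\FF)$, with the right-hand-side operators being the classical pomset-language operators.

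For $+$, the equality is immediate because the fresh conflicts $\mathrm{sym}(\init(\EE)\times\init(\FF))$ force every configuration to live entirely in $E$ or in $F$. For $\cdot$, the added bundles $\exit_\EE\mapsto e$ with $e\in\init(\FF)$ make the partial order of any configuration of $\EE\cdot\FF$ the ordinal sum of its restrictions to $E$ and $F$. For $\|$, the absence of bundles between $E$ and $F$ turns the partial order into a disjoint union, once the ineffectual delimiter events are removed by the $\hat{\cdot}$ operation. Proposition~\ref{pro:configuration-prefix} is convenient here for matching configurations with their lposets via inclusion.

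Given these lifting equalities, Gischer's theorem supplies the idempotent semiring axioms and the interchange property at once, and the natural order $\EE\leq\FF\iff\EE+\FF=\FF$ translates into pomset-language inclusion $P(\EE)\subseteq P(\FF)$. The quotient $\BES/{\equiv}$ then embeds into the complete lattice of subsumption-closed pomset sets; arbitrary suprema are given by union, arbitrary infima by intersection, and the required distributivity of $\cdot$ and $\|$ over these bounds follows from the pointwise definition of the operators on pomsets together with the fact that subsumption-closure is preserved by both union and intersection.

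The hardest step will be verifying the lifting equalities for $\cdot$ and $\|$: I would need to show that every linearisation of a configuration of $\EE\cdot\FF$ (resp.\ $\EE\|\FF$) decomposes into an $\EE$-trace followed by (resp.\ interleaved with) an $\FF$-trace, and conversely that every such combination arises from a configuration of the composite. Careful bookkeeping will be required for the auxiliary delimiter events in the $\|$ construction, and for the way the bundles $\exit_\EE\mapsto e$ propagate ordering into $\FF$ when extracting the partial order of configurations of $\EE\cdot\FF$.
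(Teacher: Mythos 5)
Your overall strategy --- push the operators through to the level of pomset languages and then invoke Gischer --- is exactly what the paper does; indeed the paper offers nothing beyond the remark that the proposition ``essentially follows from Gischer's results'', so your plan is the natural way to fill that in. However, two of the lifting equalities you place at the centre of the argument are false under the paper's actual definition of the pomset language, which ranges over \emph{all} configurations and is therefore prefix-closed. First, $P(0)\neq\emptyset$: the empty set is a configuration of every BES, so $P(0)=\{\epsilon\}=P(1)$; moreover $\EE\cdot 0$ is literally $\EE$ with its final events erased (the added bundles $\exit_\EE\mapsto e$ range over $e\in\init(0)=\emptyset$), so $P(\EE\cdot 0)=P(\EE)$. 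The paper dodges this by \emph{redefining} $\EE\bullet 0=0\bullet\EE=0$ by fiat rather than computing it; your proof must either do the same or change the definition of $P$. Second, $P(\EE\cdot\FF)=P(\EE)\cdot P(\FF)$ fails for prefix-closed languages: $P(a)\cdot P(b)$ contains the pomset $b$ (take the empty prefix of $a$ and the full pomset of $b$), but $b\notin P(a\cdot b)$, since the bundle $\exit_a\mapsto e_b$ forbids $e_b$ from occurring before $e_a$. Correspondingly, the converse half of your decomposition claim is wrong as stated: an $\FF$-trace can be appended only after a configuration of $\EE$ that already contains a final event (compare Corollary~\ref{pro:seq-maximal}), not after an arbitrary $\EE$-trace.

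The standard repair, and the one implicit in Gischer's setting, is to take the homomorphism on \emph{terminated} configurations only (those meeting $\exit$), for which $P^{t}(0)=\emptyset$, $P^{t}(1)=\{\epsilon\}$ and the concatenation and parallel equalities do hold, with subsumption closure supplying the order. You should also be more careful with the claim that infima in the quotient are intersections: embedding into the complete lattice of subsumption-closed pomset sets does not make the image a complete lattice unless the image is closed under the relevant bounds, and closure of BES-definable languages under intersection is not obvious. It is also unnecessary: once arbitrary suprema exist (via an infinitary nondeterministic sum of BES), infima come for free as suprema of lower bounds.
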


\section{Probabilistic Bundle Event Structures}\label{pBES}

In this section, we adapt Katoen's and Varacca's works on probabilistic event structures~\cite{Kat96,Var03}. In particular, we refine the notions of \emph{cluster} and \emph{confusion freeness} which are necessary for the definition of probabilistic bundle event structures (pBES). We use the standard transformation of prime event structures into BES to ensure that our definitions properly generalise Varacca's.

\subsection{Immediate Conflict, Clusters and Confusion Free BES}

The key idea of probabilistic event structures is to use probability as a mechanism to resolve conflicts. However, not all conflicts can be resolved probabilistically~\cite{Kat96}. The cases where this occurs are referred to as confusions. A typical example of confusion is depicted by the first three events $e_1,e_2$ and $e_3$ of Fig~\ref{fig:example} where $e_1\#e_2$, $e_2\#e_3$ and $\neg e_1\#e_3$ hold allowing $e_1$ and $e_3$ to occur simultaneously in a single configuration. However, if the conflict $e_1\# e_2$ is resolved with a coin flip and if the result is $e_2$, then $e_2\# e_3$ cannot be resolved probabilistically because it may produce $e_3$. Following Varacca~\cite{Var03}, we start by characterising conflicts that may be resolved probabilistically.

\begin{definition}\label{def:immediate-conflict} 
Given a BES $\EE$, two events $e,e'\in E$ are in immediate conflict if $e\# e'$ and there exists a configuration $x$ such that $x\cup\{e\}$ and $x\cup\{e'\}$ are again configurations. We write $e\#_\mu e'$ when $e$ and $e'$ are in immediate conflict.
\end{definition}

\begin{example}
In the BES of Fig.~\ref{fig:example}, $e_4$ and $e_5$ are in immediate conflict because $\{e_1,e_3,e_4\}$ and $\{e_1,e_3,e_5\}$ are configurations. In fact, every conflicts in that BES are immediate. Notice that the conflict $e_4\#e_5$ is resolved  when $e_2$ occurs. 
\begin{figure}[h!]
\begin{displaymath}
\xymatrix{
e_1\ar@{->}[dr]\ar@{}[r]|{\#_\mu} & e_2\ar@{->}[d]\ar@{}[r]|{\#_\mu}  & e_3\ar@{|->}[d] \\
& e_4\ar@{}[r]|{\#_\mu}&e_5
}
\end{displaymath}
In this BES, the bundles are $\{e_1,e_2\}\mapsto e_4$ and $\{e_3\}\mapsto e_5$. The conflict relation is $e_1\# e_2$ and $e_2\# e_3$. Therefore, $e_1$ and $e_3$ are concurrent. An arrow $\rightarrow$ represents some part of a bundle (i.e. $\{e_1,e_2\}\mapsto e_4$ is the completed bundle) and $\mapsto$ represents a bundle.
\caption{Immediate conflict in a BES.}\label{fig:example}
\end{figure}
\end{example}

Events can be grouped into clusters of events that are pairwise in immediate conflict. More precisely, we define a cluster as follow.
\begin{definition}\label{def:cluster}
A partial cluster is a set of events $K\subseteq E$ satisfying
\begin{eqnarray}
\forall e,e'\in K: e\neq e\rq{}&\Rightarrow &e\#_\mu e'\qquad\textrm{ and }\nonumber\\
\forall e,e'\in K,x\subseteq E: x\mapsto e&\Rightarrow &x\mapsto e'\nonumber\label{eq:equally-pointed}
\end{eqnarray}
A cluster is a maximal partial cluster (w.r.t inclusion).
\end{definition}

Given an event $e\in E$, the singleton $\{e\}$ is a partial cluster. Therefore, there is always at least one cluster (i.e. maximal) containing $e$ and we write $\<e\>$ the intersection of all clusters containing $e$. 

\begin{example}
In Fig.~\ref{fig:example}, $\{e_1,e_2\}$ and $\{e_2,e_3\}$ are clusters and $\<e_2\> = \{e_2\}$. 
\end{example}

\begin{proposition}
A partial cluster $K$ is maximal (i.e. a cluster) iff
$$
\forall e\in E:(\forall e'\in K:e\#_\mu e'\ \wedge\ \forall x\subseteq E: x\mapsto e\Leftrightarrow x\mapsto e')\Rightarrow e\in K\nonumber
$$
\end{proposition}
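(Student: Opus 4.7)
The plan is to prove both directions by a short case analysis on whether candidate events already lie in $K$. The statement really asserts that $K$ is maximal iff no extra event $e \notin K$ meets the defining conditions of $K$, so the work reduces to unfolding Definition~\ref{def:cluster}.

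For the forward direction I would argue contrapositively. Assume some $e \in E$ satisfies, for every $e' \in K$, both $e \#_\mu e'$ and $\forall x \subseteq E: x \mapsto e \Leftrightarrow x \mapsto e'$, yet $e \notin K$. I would then verify that $K' = K \cup \{e\}$ is a partial cluster, which contradicts maximality of $K$. The immediate-conflict axiom is easy: distinct pairs entirely in $K$ are handled by the partial-cluster property of $K$, while pairs of the form $\{e, e'\}$ with $e' \in K$ are handled by the hypothesis together with symmetry of $\#_\mu$. The equally-pointed axiom requires, for any $x \subseteq E$ and any pair $(f, f') \in K' \times K'$, that $x \mapsto f$ implies $x \mapsto f'$. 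When both are in $K$ this follows from $K$ being a partial cluster; the case $f = e, f' \in K$ is immediate from the hypothesis; the case $f \in K, f' = e$ is handled by first propagating $x \mapsto f$ to $x \mapsto e''$ for any $e'' \in K$ via the partial-cluster property of $K$, and then using the biconditional in the hypothesis to conclude $x \mapsto e$.

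For the backward direction I would take any partial cluster $K'$ with $K \subseteq K'$ and pick $e \in K'$; the goal is $e \in K$, which gives $K' = K$ and hence maximality. If $e \in K$ there is nothing to prove. Otherwise, for every $e' \in K \subseteq K'$ one has $e \neq e'$, so the partial-cluster axioms applied inside $K'$ yield precisely $e \#_\mu e'$ and $x \mapsto e \Leftrightarrow x \mapsto e'$ for all $x \subseteq E$, which is the hypothesis of the assumed implication. Therefore $e \in K$, contradicting $e \notin K$.

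The only mildly delicate step is the equally-pointed verification for the pair $(f, e)$ with $f \in K$ in the forward direction, where one has to route through an auxiliary element of $K$ to use the biconditional; everything else is routine bookkeeping, and irreflexivity of $\#_\mu$ makes the $e \in K$ case of the implication vacuously true, so no separate treatment is needed there.
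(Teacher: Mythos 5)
Your proof is correct and follows essentially the same route as the paper: the backward direction (extend $K$ to a partial cluster $H$, apply the hypothesis to each $e\in H$) is identical, and your forward direction merely spells out the one-line claim in the paper by verifying that $K\cup\{e\}$ would be a partial cluster. One small remark: the step you flag as delicate is not — for the pair $(f,e)$ with $f\in K$, the hypothesis already gives the biconditional $x\mapsto e\Leftrightarrow x\mapsto f$ directly, so no detour through an auxiliary element of $K$ is needed.
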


\begin{proof}
The forward implication follows from Definition~\ref{def:cluster} and maximality of $K$.
Conversely, assume that $K$ is a partial cluster satisfying the above property. Let $H$ be a partial cluster such that $K\subseteq H$ and $e\in H$. Then, for all $e'\in K$, $e\#_\mu e'$ and 
$$\forall z\subseteq E:x\mapsto e\Leftrightarrow x\mapsto e'$$ 
because $H$ is a partial cluster. By the hypothesis, $e\in K$ and hence $H = K$.\qed
\end{proof}

As in Katoen's and Varacca's works, clusters are used to carry probability and they can be intuitively seen as providing a choice between events where the chosen event happens instantaneously. Notice that our notion of cluster is weaker than Katoen's original definition~\cite{Kat96}: the BES in Fig.~\ref{fig:not-katoen-cluster} contains three clusters $\{e_1,e_2\}$, $\{e_3\}$ and $\{e_4,e_5\}$ and only $\{e_1,e_2\}$ satisfies Katoen's definition.\\
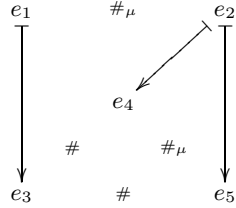
\begin{figure}[h!]
\begin{displaymath}
\xymatrix{
\ar@{|->}[dd]e_1\ar@{}[rr]|{\#_\mu}& & e_2\ar@{|->}[dl] \ar@{|->}[dd]\\
 &e_4 \ar@{}[dl]|\#\ar@{}[dr]|{\#_\mu}&  \\
e_3\ar@{}[rr]|\#&&e_5
}
\end{displaymath}
\caption{A BES where $\{e_1,e_2\}$, $\{e_3\}$ and $\{e_4,e_5\}$ are clusters.}\label{fig:not-katoen-cluster}
\end{figure}

\begin{definition}\label{def:confusion-free}
A BES $\EE$ is confusion free if for all events $e,e'\in E$, 
\begin{itemize} 
\item if $e\#_\mu e'$ then $e\in\<e'\>$, and
\item  if $\<e\>\cap x = \emptyset$ and $x\cup\{e\}\in\C(\EE)$ for some configuration $x\in\C(\EE)$, then $x\cup\{e''\}\in\C(\EE)$ for all events $e''\in\<e\>$.
\end{itemize}
\end{definition}
The first property implies that $\<e\>$ contains all events in immediate conflict with $e$ and hence the confusion introduced by $e_1,e_2$ and $e_3$ in Fig.~\ref{fig:example} is avoided. The second property says that once one event in $\<e\>$ is enabled then all events in $\<e\>$ are also enabled. 
Hence, confusion freeness ensures that all conflicts in $\<e\>$ can be resolved probabilistically regardless of the history. The proof of the following proposition is the same as for prime event structures~\cite{Var03}.

\begin{proposition}\label{pro:confusion-free-theorem}
For a confusion free BES $\EE$, the set $\{\<e\>\ |\ e\in E\}$ defines a partition of $E$. That is, the reflexive closure of $\#_\mu$ is an equivalence relation and the equivalence classes are of the form $\<e\>$.
\end{proposition}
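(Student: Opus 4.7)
The plan is to show that the reflexive closure of $\#_\mu$, call it $\approx$, is an equivalence relation on $E$ and that its equivalence classes are precisely the sets $\<e\>$. Reflexivity of $\approx$ is automatic and symmetry follows from the symmetry of $\#$ (hence of $\#_\mu$). So the content is in transitivity together with the identification of classes.

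For transitivity, I would suppose $e\#_\mu e'$ and $e'\#_\mu e''$ with $e\neq e''$ (the other cases are trivial). The first clause of confusion freeness applied to each pair gives $e\in\<e'\>$ and $e''\in\<e'\>$. Since $\{e'\}$ is a partial cluster and partial clusters extend to maximal ones, some cluster $K$ contains $e'$, and by definition of $\<e'\>$ as the intersection of all such clusters we have $e,e''\in K$. The first condition in Definition~\ref{def:cluster} then forces $e\#_\mu e''$, so $e\approx e''$.

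For the identification of classes, the equivalence class is $[e]_\approx=\{e\}\cup\{f\in E\mid e\#_\mu f\}$. If $f\in[e]_\approx$ and $f\neq e$ then $e\#_\mu f$, and the first clause of confusion freeness yields $f\in\<e\>$; together with $e\in\<e\>$ this gives $[e]_\approx\subseteq\<e\>$. Conversely, if $f\in\<e\>$ and $f\neq e$, pick any cluster $K$ containing $e$; then $f\in K$ too, and the first condition of Definition~\ref{def:cluster} yields $e\#_\mu f$, so $f\in[e]_\approx$. Hence $\<e\>=[e]_\approx$, and $\{\<e\>\mid e\in E\}$ is the partition of $E$ induced by $\approx$.

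The main obstacle is transitivity: without the first clause of confusion freeness one cannot guarantee that $e$ and $e''$ both land in a common cluster, and the conclusion $e\#_\mu e''$ would fail (this is exactly the pathology illustrated by $e_1,e_2,e_3$ in Fig.~\ref{fig:example}). Once that clause is invoked, confinement of $e$ and $e''$ to $\<e'\>$ and thence to a shared cluster makes the pairwise immediate conflict requirement of clusters do all the remaining work; the rest is a routine double inclusion.
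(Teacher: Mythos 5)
Your proof is correct. The paper itself gives no argument for this proposition — it only remarks that the proof is the same as for prime event structures in Varacca's work — and your self-contained verification (transitivity of the reflexive closure of $\#_\mu$ via the first clause of confusion freeness together with the pairwise-immediate-conflict condition on clusters, followed by the double inclusion $\<e\>=[e]_\approx$) is exactly that standard argument transposed to the bundle setting, so it correctly fills in the omitted details.
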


The second property of Definition~\ref{def:confusion-free} is usually hard to check. We give a static and simpler sufficient condition for confusion freeness.
\begin{proposition}\label{pro:confusion-free-simpler}
If a BES $\EE$ satisfies 
$$\forall e,e'\in E:(e\#_\mu e'\Rightarrow e\in\<e'\>)\wedge( \<e\>\cap\cfl(e')\neq\emptyset\Rightarrow \<e\>\subseteq\cfl(e'))$$
then it is confusion free.
\end{proposition}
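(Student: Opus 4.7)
The first clause of Definition~\ref{def:confusion-free} coincides with the first conjunct of the hypothesis, so the work reduces to deriving the second clause from the second conjunct. The plan is as follows. I would fix a configuration $x\in\C(\EE)$ satisfying $\<e\>\cap x=\emptyset$ and $x\cup\{e\}\in\C(\EE)$, pick $e''\in\<e\>$, and aim to exhibit $x\cup\{e''\}$ as a configuration by taking an arbitrary event trace $e_1\cdots e_n$ for $x$ and showing that $e_1\cdots e_n e''$ is again an event trace of $\EE$.

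Two preparatory observations should do most of the work. First, any linearisation of $x\cup\{e\}$ must enable $e$ from a prefix that is necessarily a subset of $x$, so every bundle $y\mapsto e$ meets $x$; moreover, the event-trace definition immediately implies that every configuration is conflict-free, and hence $e\notin\cfl(x)$. Second, $\<e\>$ is itself a partial cluster, because the two defining properties of Definition~\ref{def:cluster} are preserved under arbitrary intersection. Consequently, the bundle-sharing clause gives $y\mapsto e\Leftrightarrow y\mapsto e''$ for every $y\subseteq E$, so every bundle pointing to $e''$ already meets $x$. Together with $e''\notin x$ (which follows from $\<e\>\cap x=\emptyset$), the only remaining obligation for $e_1\cdots e_n e''$ to be a trace is $e''\notin\cfl(x)$.

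This last obligation is where the second conjunct of the hypothesis enters, and I expect it to be the only step where anything beyond routine bookkeeping is needed. I would argue by contradiction: if $e''\# x_0$ for some $x_0\in x$, then $e''\in\<e\>\cap\cfl(x_0)$, so the hypothesis yields $\<e\>\subseteq\cfl(x_0)$, and in particular $e\in\cfl(x_0)$, i.e.\ $e\# x_0$. But $\{e,x_0\}\subseteq x\cup\{e\}\in\C(\EE)$ forces $\neg(e\# x_0)$ by the conflict-freeness of configurations, a contradiction. Hence $e''\notin\cfl(x)$, the trace conditions at the last position of $e_1\cdots e_n e''$ are satisfied, and $x\cup\{e''\}\in\C(\EE)$ as required.
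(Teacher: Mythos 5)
Your proposal is correct and follows essentially the same route as the paper's proof: linearise $x$, append $e''$, use the shared-bundle property of the cluster $\<e\>$ to discharge the bundle condition, and use the second conjunct of the hypothesis to derive a contradiction with the conflict-freeness of $x\cup\{e\}$ if $e''$ were in conflict with some event of $x$. The only cosmetic difference is that you explicitly justify that $\<e\>$ is a partial cluster as an intersection of clusters, where the paper invokes Definition~\ref{def:cluster} directly.
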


The second argument of the conjunction says that if some event in $\<e\>$ is in conflict with an event $e'$ then all events in $\<e\>$ are in conflict with $e'$.

\begin{proof}
Let $e\in E$ and $x\in \C(\EE)$ such that $\<e\>\cap x = \emptyset$ and $x\cup\{e\}\in\C(\EE)$. Let $e'\in\<e\>$ and $z\mapsto e'$ be a bundle of $\EE$. We need to show that $x\cup\{e'\}\in\C(\EE)$. By Definition~\ref{def:cluster}, $z\mapsto e$ is also a bundle and since $x$ and $x\cup\{e\}$ are configurations, $e_1\cdots e_n e$ is again a linearisation of $x\cup\{e\}$ for every linearisation $e_1\cdots e_n$  of $x$. Therefore, $z\cap\{e_1,\dots,e_n\}\neq\emptyset$. If $e'\in\cfl(e_i)$ for some $i$, then $\<e\>\subseteq\cfl(e_i)$ by the hypothesis and hence $e\in\cfl(e_i)$, which is impossible because $x\cup\{e\}$ is a configuration. Hence $e_1\cdots e_ne'$ is an event trace, that is, $x\cup\{e'\}\in\C(\EE)$.\qed
\end{proof}

\begin{example}
Fig.~\ref{fig:not-katoen-cluster} depicts a confusion free BES that satisfies Proposition~\ref{pro:confusion-free-simpler}.
\end{example}

With confusion freeness, we are now able to define probability distributions supported by clusters. Recall that a probability distribution on the set $E$ is a function $p:E\to[0,1]$ such that $\sum_{e\in E}p.e = 1$. We say that $p$ is a \textit{probability distribution on $\EE$} if $\supp(p)\subseteq\<e\>$ for some event $e$.

\begin{definition}
A probabilistic BES is a tuple $(\EE,\pi)$ where $\EE$ is a confusion free BES and $\pi$ is a set of probability distribution on $\EE$ such that for every $e\in E$, there exists $p\in\pi$ such that $e\in\supp(p)$.
\end{definition}

The intuition behind this definition is simple: if there is no $p\in\pi$ such that $e\in\supp(p)$ then $e$ is an impossible event and  it can be removed (this may affect any event $e'$ such that $e\preceq_x e'$ for some $x\in\C(\EE)$). Our approach differs from both Varacca's~\cite{Var03} and Katoen's~\cite{Kat96} in that nondeterminism is modelled concretely as a set of probabilistic choices. This approach will mainly contribute to the definition of the probabilistic choice operator $\pc{\alpha}$ of Section~\ref{pcka}. For instance, the expression $a+(b\pc{\alpha} c)$ does not have any meaning in Katoen's pBES, however, it will have a precise semantics in our case.

\section{Probabilistic Simulation on pBES}\label{simulation}

The weakest interpretation of $\refby$ on pBES is the configuration distribution equivalence~\cite{Var03}. However, as in the interleaving case, that is not a congruence~\cite{Seg95}. We use probabilistic simulations which are based on the notion of lifting from~\cite{Den07a}. We denote by $\D(X)$ the set of (discrete) probability distributions over the set $X$. Given $x\in X$, we denote by $\delta_x$ the point distribution concentrated at $x$.

Let $S\subseteq X\times\D(Y)$ be a relation. The lifting of $S$ is a relation $\ov{S}\subseteq \D(X)\times\D(Y)$ such that $(\Delta,\Theta)\in\ov{S}$ iff
\begin{itemize}
\item $\Delta = \sum_{i}\alpha_i\delta_{x_i}$ where $\sum_i\alpha_i = 1$,
\item for every $x_i$, there exists $\Theta_i\in\D(Y)$ such that $(x_i,\Theta_i)\in S$,
\item $\Theta = \sum_{i}\alpha_i\Theta_i$.
\end{itemize}

Notice that the decomposition of $\Delta$ may not be unique. The main properties of lifting are summarised in the following proposition.

\begin{proposition}[\cite{Den07a}]\label{pro:lifting} 
Let $S\subseteq X\times \D(Y)$ be a relation and $\sum_i\alpha_i = 1$. We have
\begin{itemize}
\item if $(\Delta_i,\Theta_i)\in\ov S$ then $(\sum_i\alpha_i\Delta_i,\sum_i\alpha_i\Theta_i)\in\ov S$,
\item if $(\sum_i\alpha_i\Delta_i,\Theta)\in\ov S$ then there exists a collection of distributions  $\Theta_i$ such that $(\Delta_i,\Theta_i)\in\ov S$ and  $\Theta = \sum_i\alpha_i\Theta_i$.
\end{itemize}
\end{proposition}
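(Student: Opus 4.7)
The plan is to unfold the definition of $\ov{S}$ into explicit decompositions as convex combinations of Dirac masses, and then rearrange these decompositions in each direction. Both items are essentially bookkeeping on sums of the form $\Delta=\sum \alpha_i\delta_{x_i}$, so no deep structural fact about $S$ is needed; the content is purely about how such decompositions compose and split.

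For the first item, I would start from witnesses $\Delta_i = \sum_j \beta_{i,j}\delta_{x_{i,j}}$ with $(x_{i,j},\Theta'_{i,j})\in S$ and $\Theta_i = \sum_j \beta_{i,j}\Theta'_{i,j}$, given by $(\Delta_i,\Theta_i)\in\ov S$. Substituting into $\sum_i\alpha_i\Delta_i$ yields the double sum $\sum_{i,j}(\alpha_i\beta_{i,j})\delta_{x_{i,j}}$, and since $\sum_j\beta_{i,j}=1$ for each $i$ and $\sum_i\alpha_i=1$, the new coefficients $\alpha_i\beta_{i,j}$ form a convex combination. The same reindexing applied to the $\Theta$-side gives $\sum_i\alpha_i\Theta_i = \sum_{i,j}(\alpha_i\beta_{i,j})\Theta'_{i,j}$, which is exactly a witness that $(\sum_i\alpha_i\Delta_i,\sum_i\alpha_i\Theta_i)\in\ov S$.

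For the second item, pick a single decomposition $\sum_i\alpha_i\Delta_i = \sum_k\gamma_k\delta_{y_k}$ with pairwise distinct $y_k$, together with $\Theta'_k$ such that $(y_k,\Theta'_k)\in S$ and $\Theta = \sum_k\gamma_k\Theta'_k$; existence of such a decomposition is precisely what $(\sum_i\alpha_i\Delta_i,\Theta)\in\ov S$ asserts. The crucial observation is that because all $\alpha_i\ge 0$, we have $\supp(\Delta_i)\subseteq\{y_k\}$ for every $i$, so each $\Delta_i$ can be rewritten as $\Delta_i = \sum_k \Delta_i(y_k)\delta_{y_k}$ using the very same index set. I would then define
\[\Theta_i \;=\; \sum_k \Delta_i(y_k)\,\Theta'_k,\]
which is a convex combination of distributions and hence itself a distribution, and which immediately witnesses $(\Delta_i,\Theta_i)\in\ov S$. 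A short calculation $\sum_i\alpha_i\Theta_i = \sum_k\bigl(\sum_i\alpha_i\Delta_i(y_k)\bigr)\Theta'_k = \sum_k\gamma_k\Theta'_k = \Theta$ closes the argument.

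The only genuine subtlety is the second item: a priori one might worry that different $\Delta_i$ force incompatible choices of target distributions at their shared support points. I would emphasize that this obstacle disappears because of the non-negativity of the weights $\alpha_i$, which guarantees $\supp(\Delta_i)\subseteq\supp\bigl(\sum_i\alpha_i\Delta_i\bigr)$ and therefore allows every $\Delta_i$ to reuse the single witnessing family $(y_k,\Theta'_k)$ fixed at the outset. Once this is noticed, the construction of the $\Theta_i$ is forced and the verification is immediate.
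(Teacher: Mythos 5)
Your first item is correct and is the standard argument: substituting the witnessing decompositions $\Delta_i=\sum_j\beta_{i,j}\delta_{x_{i,j}}$ and reindexing over pairs $(i,j)$ yields a legitimate witness for $(\sum_i\alpha_i\Delta_i,\sum_i\alpha_i\Theta_i)\in\ov S$. (The paper does not prove this proposition itself --- it imports it from Deng et al.\ --- so I am judging your argument on its own terms.)

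The second item has a genuine gap at its very first step. You ``pick a decomposition $\sum_i\alpha_i\Delta_i=\sum_k\gamma_k\delta_{y_k}$ with pairwise distinct $y_k$'' and claim its existence ``is precisely what $(\sum_i\alpha_i\Delta_i,\Theta)\in\ov S$ asserts.'' It is not: the definition of lifting only supplies \emph{some} decomposition, whose points may repeat (the paper explicitly remarks that the decomposition is not unique, and your own proof of item~1 manufactures exactly such repeated-point decompositions). Take $S=\{(x,\delta_a),(x,\delta_b)\}$: then $(\delta_x,\tfrac12\delta_a+\tfrac12\delta_b)\in\ov S$ only via $\delta_x=\tfrac12\delta_x+\tfrac12\delta_x$, and no witness with distinct support points exists, because $S$ need not be closed under convex combinations in its second argument --- which is also why you cannot repair this by merging repeated $y_k$'s and averaging their targets. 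With repeated $y_k$, your formula $\Delta_i=\sum_k\Delta_i(y_k)\delta_{y_k}$ overcounts and $\Theta_i=\sum_k\Delta_i(y_k)\Theta'_k$ is not even a distribution. The standard repair is a proportional split: writing $\Delta=\sum_i\alpha_i\Delta_i$, set
\[\Theta_i=\sum_k\frac{\Delta_i(y_k)}{\Delta(y_k)}\,\gamma_k\,\Theta'_k,\]
which reduces to your formula exactly when the $y_k$ are distinct; your support observation $\supp(\Delta_i)\subseteq\supp(\Delta)$ (valid for $\alpha_i>0$) is then precisely what shows each $\Theta_i$ is a distribution with $(\Delta_i,\Theta_i)\in\ov S$, and $\sum_i\alpha_i\Theta_i=\sum_k\gamma_k\Theta'_k=\Theta$ as you compute. (One should also either assume all $\alpha_i>0$ or discard zero-weight indices, since for $\alpha_i=0$ nothing forces $\Delta_i$ to lie in the domain of $\ov S$.)
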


Since the notion of configuration for a pBES $(\EE,\pi)$ is independent of $\pi$, we keep the notation $\C(\EE)$ for the set of all finite configurations. An example of relation on $\C(\EE)\times\D(\C(\EE))$ is given by the probabilistic prefixing. We say that $x\in\C(\EE)$ is a \textit{prefix} of $\Delta\in\D(\C(\EE))$, denoted (again) by $x\prefix\Delta$, if there exists $p\in\pi$ such that $\supp(p)\cap x = \emptyset$ and $\Delta = \sum_{e\in \supp(p)}(p.e) \delta_{x\cup\{e\}}$. In particular, if $\<e\>  = \{e\}$, $e\notin x$ and $x\cup\{e\}\in\C(\EE)$ then $x\prefix\delta_{x\cup\{e\}}$.  

The relation $\prefix$ is lifted to $\ov{\prefix} \subseteq\D(\C(\EE))\times\D(\C(\EE))$ and the reflexive transitive closure of the lifted relation is denoted by $\ov{\prefix}^*$. Probabilistic prefixing allows us to construct a \emph{configuration-tree} for every pBES. An example is depicted in Fig.~\ref{fig:configuration-tree}.
\begin{figure}[!ht]
\begin{tiny}
\begin{displaymath}
\xymatrix{
&\emptyset\ar[d]&\\
&\{e\}\ar[d]\ar@{.>}[rd]^{0.2}\ar@{.>}[dl]_{0.8}&\\
\{e,e_2\}\ar[d]&\{e,e_1\}\ar@{.>}[dl]_{0.8}\ar@{.>}[dr]^{0.2}&\{e,e_3\}\ar[d]\\
\{e,e_1,e_2\}\ar[d]&&\{e,e_1,e_3\}\ar[d]\\
\{e,e_1,e_2,f\}& &\{e,e_1,e_3,f\} 
}
\end{displaymath}
\end{tiny}
The dotted arrows with common source are parts of a probabilistic prefix relation (e.g. $\{e\}\prefix0.8\delta_{\{e,e_2\}} + 0.2\delta_{\{e,e_3\}}$). The events $e,f$ are the delimiters introduced by $\|$.
\caption{The configurations-tree of the pBES $e_1\|(e_2\pc{0.2}e_3)$ ($\pc{0.2}$ is defined later).}\label{fig:configuration-tree}
\end{figure}

To simplify the presentation, we restrict ourselves to BES satisfying $\exit\cap x = \emptyset$ for every bundle $x\mapsto e$, that is, no event is enabled by a final event. This allows a simpler presentation of the preservation of final events by a simulation. Notice that all BES constructed from the operators defined in this paper satisfy that property (details can be found in the appendix).

\begin{definition}\label{def:simulation}
A (probabilistic) simulation from $(\EE,\pi)$ to $(\FF,\rho)$ is a relation $S\subseteq\C(\EE)\times\D(\C(\FF))$ such that
\begin{itemize}
\item $(\emptyset,\delta_\emptyset)\in S$,
\item if $(x,\Theta)\in S$ then for every $y\in\supp(\Theta)$, $x \refby_s y$,
\item if $(x,\Theta)\in S$ and $x\prefix\Delta'$ then there exists $\Theta'\in\D(\C(\FF))$ such that $\Theta\ov{\prefix}^*\Theta'$ and $(\Delta',\Theta')\in\ov{S}$.
\item if $(x,\Theta)\in S$ and $x\cap\exit_\EE\neq\emptyset$ then for every $y\in\supp(\Theta)$ we have $y\cap\exit_\FF\neq\emptyset$.
\end{itemize}
We write $(\EE,\pi)\refby(\FF,\rho)$ if there is a simulation from $(\EE,\pi)$ to $(\FF,\rho)$.
\end{definition}
Indeed, Definition~\ref{def:simulation} is akin to probabilistic forward simulation on automata. The main difference is the use of the implementation relation $x\refby_s y$ which holds iff there exists a label preserving monotonic bijection from $(\hat y,\preceq_{y},\lambda_{y})$ to $(\hat{x},\preceq_{x},\lambda_{x})$. The implementation relation compares partially ordered configurations rather than totally ordered traces, hence, interferences between incomparable or concurrent events are allowed. Another consequence of this definition is that concurrent events can be linearised while preserving simulation.

\begin{proposition}\label{pro:preorder}
$\refby$ is a preorder.
\end{proposition}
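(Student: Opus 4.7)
The plan is to verify reflexivity and transitivity directly from Definition~\ref{def:simulation}, with the bulk of the work going into transitivity, which requires transporting a probabilistic step across a composed simulation.

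For reflexivity, I would exhibit the diagonal relation $S_{\mathrm{id}}=\{(x,\delta_x)\mid x\in\C(\EE)\}$ as a simulation of $(\EE,\pi)$ by itself. The first clause is immediate since $\emptyset\in\C(\EE)$, and the second reduces to $x\refby_s x$, witnessed by the identity bijection on $\hat x$. For the step clause, if $x\prefix\Delta'$, I would take $\Theta'=\Delta'$; then $\delta_x\ov{\prefix}\Delta'$ holds trivially by the definition of lifting applied to a point distribution, and writing $\Delta'=\sum_i\alpha_i\delta_{x_i}$ gives $(\Delta',\Delta')\in\ov{S_{\mathrm{id}}}$ since each $(x_i,\delta_{x_i})\in S_{\mathrm{id}}$. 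The final-event clause is immediate because $\supp(\delta_x)=\{x\}$.

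For transitivity, suppose $S_1$ simulates $(\EE,\pi)$ by $(\FF,\rho)$ and $S_2$ simulates $(\FF,\rho)$ by $(\G,\sigma)$. I would define the composition
\[
S\ =\ \{(x,\Theta)\mid\exists\,\Xi\in\D(\C(\FF)):(x,\Xi)\in S_1\wedge(\Xi,\Theta)\in\ov{S_2}\},
\]
and check the four clauses. Initialisation uses $(\emptyset,\delta_\emptyset)\in S_1,S_2$ and the identity lifting. The subsumption clause follows by observing that $\refby_s$ is transitive (composing the label-preserving monotonic bijections): for $z\in\supp(\Theta)$, the decomposition of $\ov{S_2}$ yields some $y\in\supp(\Xi)$ with $y\refby_s z$, and $(x,\Xi)\in S_1$ gives $x\refby_s y$. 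The final-event clause is handled similarly: from $x\cap\exit_\EE\neq\emptyset$ one gets $y\cap\exit_\FF\neq\emptyset$ for every $y\in\supp(\Xi)$ via $S_1$, and then $z\cap\exit_\G\neq\emptyset$ via $S_2$ applied to each $(y_i,\Theta_i)$ in the $\ov{S_2}$ decomposition $\Theta=\sum_i\beta_i\Theta_i$.

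The main obstacle is the step clause, for which I would prove the auxiliary transport lemma: if $(\Xi,\Theta)\in\ov{S_2}$ and $\Xi\ov{\prefix}^*\Xi'$, then there exists $\Theta'$ with $\Theta\ov{\prefix}^*\Theta'$ and $(\Xi',\Theta')\in\ov{S_2}$. This is proved by induction on the number of $\ov{\prefix}$-steps; for a single step, the second part of Proposition~\ref{pro:lifting} lets me refine the decomposition of $\Xi$ to match the one witnessing $\Xi\ov{\prefix}\Xi'$, so that for each atom $(y_j,\Theta_j)\in S_2$ in the refined decomposition I can apply the step clause of $S_2$ to $y_j\prefix\Xi'_j$ and obtain $\Theta_j\ov{\prefix}^*\Theta'_j$ with $(\Xi'_j,\Theta'_j)\in\ov{S_2}$; the first part of Proposition~\ref{pro:lifting} then recombines these into $\Theta\ov{\prefix}^*\sum_j\gamma_j\Theta'_j$ and $(\Xi',\sum_j\gamma_j\Theta'_j)\in\ov{S_2}$. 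Given the lemma, the step clause for $S$ follows by first stepping through $S_1$ to get $\Xi'$ with $(\Delta',\Xi')\in\ov{S_1}$ and $\Xi\ov{\prefix}^*\Xi'$, then transporting through $\ov{S_2}$ to obtain $\Theta'$; composing the lifted relations gives $(\Delta',\Theta')\in\ov{S}$, completing the argument.
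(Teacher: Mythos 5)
Your proof is correct and follows essentially the same route as the paper, which exhibits the diagonal relation $\{(x,\delta_x)\mid x\in\C(\EE)\}$ for reflexivity and the composition $R\circ\ov{S}$ for transitivity, appealing to Proposition~\ref{pro:lifting} and the interleaving-case argument of~\cite{Den07a}. Your transport lemma is precisely the detail the paper leaves implicit in that citation, so you have simply filled in the sketch rather than taken a different path.
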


The proof is the same as in~\cite{Den07a}, hence, we provide only a sketch.

\begin{proof}[Sketch]
Reflexivity is clear by considering the relation $\{(x,\delta_x)\ |\ x\in\C(\EE)\}$ which is indeed a simulation. If $R,S$ are probabilistic simulations from $(\EE,\pi)$ to $(\FF,\rho)$ and $(\FF,\rho)$ to $(\G,r)$ respectively then we can show, using Proposition~\ref{pro:lifting} and a similar proof as in the interleaving case, that $R\circ\ov{S}$ is a probabilistic simulation from $(\EE,\pi)$ to $(\G,r)$.\qed 
\end{proof}

A major difference from our previous work~\cite{Rab13} is that the event structure approach provides a truly concurrent interpretation of pCKA. The most notable benefit of using a true-concurrent model is substitution~\cite{Gor97,Gis88} where a single step event can be refined with another event structure after a concurrency operator has been applied. In the automata model, such a substitution must occur before the application of the concurrency operator to obtain the correct behaviour. Moreover, in interleaving, concurrency is related to the nondeterministic choice whereas here the two operators are orthogonal. 
\begin{example}
In Fig.~\ref{fig:concurrency}, it is shown that $a\cdot b + b\cdot a\refby a\|b$ but the converse does not hold. 
\end{example}
\begin{figure}[!ht]
\begin{tiny}
$$
\hspace{-1cm}\xymatrix{
&\emptyset\ar[dl]\ar[dr]\ar@/^/[rrrr]& &&  		&\emptyset\ar[d]& \\
\{e_a\}\ar[d]\ar[rrrrd]&&\{e_b'\}\ar[d]\ar@/^/[rrrrd] && 				&\{e\}\ar[dl]\ar[dr]&\\
\{e_a\preceq e_b\}\ar@/_/[rrrrrdd]&&\{e'_b\preceq e_a'\}\ar@/^/[rrrdd]				&& \{e,f_a\}\ar[dr]&& \{e,f_b\}\ar[dl]\\
&& && &\{e,f_a,f_b\}\ar[d]& \\
&& && &\{e,f_a,f_b,f\}&
}
$$
\end{tiny}
Since $\{e,f_a,f_b,f\}\not\refby_s\{e_a\preceq e_b\}$ nor $\{e,f_a,f_b,f\}\not\refby_s\{e_b'\preceq e_a'\}$, it is impossible to find a simulation from $a\|b$ to $a\cdot b+b\cdot a$. In the configuration tree on the left, the order $\preceq$ is made explicit and primes are introduced for disjointness.
\caption{A simulation from $a\cdot b+b\cdot a$ to $a\|b$. }\label{fig:concurrency}
\end{figure}

\section{Probabilistic Concurrent Kleene Algebra}\label{pcka}

In this section, we show that the set  $\pBES$ endowed with a nondeterministic choice $(+)$, a probabilistic choice $(\!\pc{\alpha}\!)$, a sequential composition $(\cdot)$, a concurrent composition $(\|)$ and the binary Kleene star $(*)$ satisfy the axioms of Fig.~\ref{fig:axioms}. These axioms are a combination of the basic algebraic laws of CKA~\cite{Hoa09} and pKA~\cite{Mci05}. 
\begin{figure}
\hspace{-1cm}\begin{minipage}{.5\linewidth}
\begin{eqnarray}
\EE + \EE & \equiv & \EE\label{eq:+-idem}\\
\EE + \FF & \equiv & \FF + \EE \\
\EE + (\FF + \G) & \equiv & (\EE + \FF) + \G\label{eq:+-assoc}\\
\EE + 0 & \equiv & \EE \label{eq:+-zero}
\end{eqnarray}
\end{minipage}
\begin{minipage}{.57\linewidth}
\begin{eqnarray}
\EE & \equiv &\EE\pc{\alpha}\EE\label{eq:pc-idem}\\
\EE\pc{\alpha}\FF &\equiv & \FF\pc{1-\alpha}\EE\\
\EE\pc{\alpha}(\FF\pc{\beta}\G) &\equiv & (\EE\pc{\frac{\alpha(1-\beta)}{1-\alpha\beta}}\FF)\pc{\alpha\beta}\G\label{eq:pc-assoc}\\
(\EE\pc{\alpha}\FF)\cdot\G & \equiv & \EE\cdot\G\pc{\alpha}\FF\cdot\G
\end{eqnarray}
\end{minipage}

\hspace{-1cm}\begin{minipage}{.5\linewidth}
\begin{eqnarray}
\EE \cdot (\FF \cdot \G) & \equiv & (\EE \cdot \FF) \cdot \G\label{eq:cdot-assoc}\\
\EE\cdot 1 & \equiv & \EE\label{eq:rone}\\
1\cdot \EE & \equiv & \EE \label{eq:lone}\\
0\cdot\EE& \equiv & 0 \label{eq:seq-zero}
\end{eqnarray}
\end{minipage}
\begin{minipage}{.57\linewidth}
\begin{eqnarray}
1\|\EE &\equiv & \EE\label{eq:par-unit}\\
\EE\| \FF & \equiv & \FF\|\EE\label{eq:par-comm}\\
\EE \| (\FF \| \G) & \equiv & (\EE \| \FF) \| \G\label{eq:par-assoc}
\end{eqnarray}
\end{minipage}

\hspace{-1cm}\begin{minipage}{.5\linewidth}
\begin{eqnarray}
(\EE+\FF)\cdot \G & \equiv & \EE\cdot \G + \FF\cdot \G\label{eq:+-dist-seq} \\
\EE\cdot \FF+\EE\cdot \G & \refby & \EE \cdot (\FF+\G)\label{eq:+-subdist-seq}\\
\EE\cdot(\FF\pc{\alpha}\G)&\refby &\EE\cdot\FF\pc{\alpha}\EE\cdot\G\label{eq:pc-supdist-seq}
\end{eqnarray}
\end{minipage}
\begin{minipage}{.57\linewidth}
\begin{eqnarray}
\EE\| \FF+\EE\| \G & \refby & \EE \| (\FF+\G)\label{eq:+-subdist-par}\\
\EE\|(\FF\pc{\alpha}\G)& \refby&\EE\|\FF\pc{\alpha}\EE\|\G\label{eq:pc-supdist-par}\\
(\EE\|\FF)  \cdot (\EE'\| \FF') & \refby & (\EE\cdot \EE')\| (\FF\cdot \FF')\label{eq:interchange-law}
\end{eqnarray}
\end{minipage}

\begin{eqnarray}
\FF + \EE\cdot(\EE*\FF) & \equiv & (\EE*\FF) \label{eq:unfold}\\
\G + \EE\cdot\FF\refby\FF&\Rightarrow&\EE*\G\refby\FF\label{eq:induction}
\end{eqnarray}

\caption{Axioms of pCKA satisfied by $\pBES$ modulo probabilistic simulation. Here, we write a pBES simply with $\EE$ instead of the tuple $(\EE,\pi)$ and $\alpha\beta<1$ in Equation~(\ref{eq:pc-assoc}) (the case $\alpha\beta=1$ being a simplification of the left hand side).}\label{fig:axioms}
\end{figure}

We generate the pBES $(0,\emptyset),(1,\{\delta_{e}\})$ and $(a,\{\delta_{e_a}\})$ from the basic BES. To simplify the notations, these basic  pBES are again denoted by $0,1$ and $a$. The other operators are defined as follows:
\begin{eqnarray}
(\EE,\pi) + (\FF,\rho) & = & (\EE+\FF,\pi\cup \rho)\nonumber\label{op:nondet}\\
(\EE,\pi)\cdot(\FF,\rho) & = & (\EE\cdot\FF,\pi\cup \rho)\nonumber\label{op:seq}\\
(\EE,\pi)\|(\FF,\rho) & = & (\EE\|\FF,\pi\cup \rho\cup\{\delta_e,\delta_f\})\nonumber\label{op:par}
\end{eqnarray}
where $e$ and $f$ are the fresh events delimiting $\EE\|\FF$. Recall that $\EE$ and $\FF$ are assumed to be disjoint in these definitions. The probabilistic choice that chooses $\EE$ with probability $1-\alpha$ and $\FF$ with probability $\alpha$ is
\begin{eqnarray}
(\EE,\pi)\pc{\alpha}(\FF,\rho) &  = &  (\EE+\FF,\pi\pc{\alpha}\rho)\label{op:pchoice}\nonumber
\end{eqnarray}
where $r\in \pi\pc{\alpha}\rho$ iff:
\begin{itemize}
\item if $\supp(r)\subseteq\init(\EE)\cup\init(\FF)$ then $r = (1-\alpha) p+\alpha q$ for some $p\in\pi$ and $q\in\rho$,
\item else $r\in\pi\cup\rho$.
\end{itemize}

Intuitively, nondeterminism is resolved first by choosing a probability distribution, then a probabilistic choice is resolved based on that distribution. Indeed, the nondeterministic and probabilisic choices introduce clusters.  
\begin{example}
The BES $a\|(b\pc{0.2}c)$ contains four clusters $\<e\>,\<e_b,e_c\>,\<e_a\>$ and $\<f\>$ where $e,f$ are the delimiter events. It has a set of probability distributions $\{0.8\delta_{e_b} + 0.2\delta_{e_c},\delta_{e_a},\delta_e,\delta_f\}$. In contrast, the event structure $a + (b\pc{0.2}c)$ has a single cluster $\<e_a,e_b,e_c\>$ with set of probability distributions $\{0.8\delta_{e_b} + 0.2\delta_{e_c},\delta_{e_a}\}$.
\end{example}

To construct the binary Kleene star, we need the following partial order 
$$(\EE,\pi)\prefix(\FF,\rho)\qquad\textrm{iff}\qquad \EE\prefix\FF\wedge \pi = \{p\in\rho\ |\ \supp(p)\subseteq E\}.$$
The proof that $\prefix$ is indeed $\omega$-complete is essentially the same as in the standard case (Section~\ref{operations}). Hence the Kleene product $(\EE,\pi)*(\FF,\rho)$ is again the limit of the increasing sequence of pBES:
$$(\FF,\rho)\prefix (\FF,\rho) + (\EE,\pi)\cdot(\FF,\rho)\prefix (\FF,\rho)+ (\EE,\pi)\cdot((\FF,\rho) + (\EE,\pi))\prefix\cdots.$$
More precisely, $(\EE,\pi)*(\FF,\rho) = (\EE*\FF,\pi*\rho)$ where $\pi*\rho = \cup_i\pi*_{\le i}\rho$ and each set $\pi*_{\le i}\rho$ is obtained from the construction of $\EE*_{\le i}\FF$.

A BES is \textit{regular} if it is inductively defined with the operators of Section~\ref{operations}. 

\begin{proposition}\label{pro:confusion-free-regular}
A Regular BES is confusion free.
\end{proposition}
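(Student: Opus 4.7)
The plan is to proceed by structural induction on the inductive definition of a regular BES. The base cases $0$, $1$, and $a$ contain no conflicts whatsoever, so both clauses of Definition~\ref{def:confusion-free} hold vacuously. For the inductive step, the main technical tool I would use throughout is Proposition~\ref{pro:confusion-free-simpler}, which reduces confusion freeness to two static checks: (a) $e\#_\mu e' \Rightarrow e \in \<e'\>$, and (b) $\<e\>\cap\cfl(e')\neq\emptyset \Rightarrow \<e\>\subseteq\cfl(e')$.

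Assume the inductive hypothesis: $\EE$ and $\FF$ are confusion free. For sequential composition $\EE\cdot\FF$ no new conflicts are added; the new bundles $\exit_\EE\mapsto e$ are introduced uniformly for all $e\in\init(\FF)$, which preserves the equi-bundled condition of Definition~\ref{def:cluster} inside any initial-event cluster of $\FF$, while clusters of $\EE$ remain untouched. For concurrent composition $\EE\|\FF$ the delimiter events $e,f$ form singleton clusters (they participate in no conflict) and events of $\EE$ and $\FF$ never conflict across the disjoint union, so every original cluster carries over verbatim. For nondeterministic choice $\EE+\FF$, the added conflicts $\mathrm{sym}(\init(\EE)\times\init(\FF))$ and $\mathrm{sym}(\exit_\EE\times\exit_\FF)$ are all \emph{immediate} (the empty configuration witnesses the initial case, and a suitable pre-final configuration witnesses the final case), so the resulting clusters merge the initial (resp. final) clusters of $\EE$ with those of $\FF$; since initial events carry no bundles, the equi-bundled requirement is trivially preserved on the merged initial clusters. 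In each case I would then verify (a) and (b) from the inductive hypothesis.

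The main obstacle is the Kleene star $\EE*\FF$, which is the $\omega$-limit of the chain $\EE*_{\leq n}\FF$ from Proposition~\ref{pro:omega-completeness}. My plan is first to establish confusion freeness of each approximant $\EE*_{\leq n}\FF$ by a secondary induction on $n$, reusing the cases for $+$ and $\cdot$ already handled. The remaining task is to transfer confusion freeness to the limit. The key observation is that whenever $\EE'\prefix\EE''$, the definition of $\prefix$ forces $\#_{\EE'}=\#_{\EE''}\cap(E'\times E')$ and reflects every bundle of $\EE''$ that points to an event of $E'$ back into $\EE'$; consequently, the immediate conflict relation $\#_\mu$, the set of configurations restricted to $E'$, and thus the cluster $\<e\>$ for any $e\in E'$ are stable along the chain. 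Hence any violation of clause (a) or (b) in $\EE*\FF$ involves finitely many events, all of which already appear at some stage $\EE*_{\leq n}\FF$, producing a violation at that stage and contradicting the secondary induction hypothesis. The subtle point to verify carefully will be this stability of $\<e\>$ under the limit, since a priori an event of $E'$ could acquire new cluster-mates in the larger structure; ruling this out requires combining the conflict-restriction and bundle-restriction clauses of $\prefix$ with the definition of cluster.
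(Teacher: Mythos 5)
There is a genuine gap: your inductive step rests on Proposition~\ref{pro:confusion-free-simpler}, but its second condition (your (b)) is only a \emph{sufficient} criterion for confusion freeness and is not in fact satisfied by regular BES in general, so it cannot serve as the invariant of the induction. Concretely, take $(a+b)\cdot c + d$. The events $e_a,e_b,e_d$ are pairwise in immediate conflict (witnessed by the empty configuration) and none of them is pointed by any bundle, so $\<e_d\>=\{e_a,e_b,e_d\}$; on the other hand the outer $+$ puts $e_c\in\exit_{(a+b)\cdot c}$ in conflict with $e_d$ but not with $e_a$ or $e_b$. Hence $\<e_d\>\cap\cfl(e_c)=\{e_d\}\neq\emptyset$ while $e_a\notin\cfl(e_c)$, so $\<e_d\>\not\subseteq\cfl(e_c)$ and (b) fails --- even though this regular BES \emph{is} confusion free, because $e_c\#_\mu e_d$ does not hold (any configuration enabling $e_c$ already contains $e_a$ or $e_b$ and hence excludes $e_d$). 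The conflicts $\mathrm{sym}(\exit_\EE\times\exit_\FF)$ introduced by $+$ are in general not immediate, and they are exactly what breaks (b). The paper's proof avoids this entirely: it verifies the dynamic clause of Definition~\ref{def:confusion-free} directly by structural induction, decomposing a configuration $x$ with $x\cup\{e\}\in\C(\EE)$ according to the shape of the composite (e.g.\ a configuration of $\EE_1\cdot\EE_2$ that meets $E_2$ restricts to a maximal configuration of $\EE_1$) and applying the induction hypothesis to the component containing $e$; your cases for $\cdot$, $\|$ and $+$ would need to be redone in that style.

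A second, smaller problem is the claimed stability of $\<e\>$ along the chain $\EE*_{\le n}\FF$: it is false as stated. Passing from $\EE*_{\le n}\FF$ to $\EE*_{\le n+1}\FF$, the initial events of the innermost copy of $\FF$ acquire new cluster-mates, namely the initial events of the freshly added copy of $\EE$ (both are pointed by the same bundle $\exit_\EE\mapsto\cdot$ and become mutually conflicting through the inner $+$), so an event of $E'$ really can gain cluster-mates in the larger structure. The limit argument is still salvageable --- every cluster stabilises after finitely many stages and every finite configuration already lives in some approximant, which is essentially how the paper reduces the $*$ case to the $+$ and $\cdot$ cases --- but not via the invariance you propose.
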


\begin{proof}[Sketch]
By induction on the structure of the BES.
\end{proof}

\begin{proposition}\label{pro:precongruence}
The order $\refby$ is a precongruence i.e. for every pBES $(\EE,\pi), (\FF,\rho)$ and $(\G,\eta)$, if $(\EE,\pi)\refby (\FF,\rho)$ then $(\EE,\pi)\bullet(\G,\eta)\refby (\FF,\rho)\bullet(\G,\eta)$ (and symmetrically) for every $\bullet\in\{+,\cdot,\|,*\}$. 
\end{proposition}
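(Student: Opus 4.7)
The plan is to build, for each operator $\bullet\in\{+,\cdot,\|,*\}$, an explicit probabilistic simulation from $(\EE,\pi)\bullet(\G,\eta)$ to $(\FF,\rho)\bullet(\G,\eta)$ by combining a witnessing simulation $S$ for $(\EE,\pi)\refby(\FF,\rho)$ with the identity simulation $\{(z,\delta_z)\mid z\in\C(\G)\}$ on $(\G,\eta)$. The common idea is that every configuration of a compound pBES decomposes canonically into an $\EE$-part and a $\G$-part, so the simulation can be assembled componentwise. Throughout I rely on the observation that $\refby_s$ is preserved under disjoint union with a common sub-lposet: extending the bijection witnessing $x\refby_s y$ by the identity on additional events $z$ yields $x\cup z\refby_s y\cup z$, provided $z$ sits coherently (above, below, or in parallel with) the existing events on both sides, which is precisely the situation produced by each of our operators.

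For nondeterminism, since $\init(\EE)\#\init(\G)$, every configuration of $\EE+\G$ lies entirely in $\EE$ or entirely in $\G$, and similarly in $\FF+\G$. Setting $S^+ = S\cup\{(z,\delta_z)\mid z\in\C(\G)\}$, a $\G$-prefix uses a distribution from $\eta$, which is present in both pBES; an $\EE$-prefix is handled directly by the prefix clause of $S$; the final-event and implementation clauses are immediate. For sequential composition, each $x\in\C(\EE\cdot\G)$ decomposes uniquely as $x_\EE\cup x_\G$ with $x_\G\neq\emptyset$ only if $\exit_\EE\cap x_\EE\neq\emptyset$, and I define
$$S^\cdot = \bigl\{\bigl(x_\EE\cup x_\G,\textstyle\sum_i\alpha_i\delta_{y_i\cup x_\G}\bigr)\mid (x_\EE,\textstyle\sum_i\alpha_i\delta_{y_i})\in S,\ x_\G\in\C(\G)\bigr\}.$$
A prefix step either extends $x_\EE$---handled by $S$ together with the observation that attaching $x_\G$ to every element of the resulting $\ov{\prefix}^*$-chain preserves the relation---or it extends $x_\G$, which is enabled only when $\exit_\EE\cap x_\EE\neq\emptyset$; the final-event clause of $S$ then guarantees $\exit_\FF\cap y_i\neq\emptyset$ for every $y_i\in\supp(\Theta)$, so the same $\G$-prefix is enabled in each $y_i\cup x_\G$. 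The concurrency case is analogous, with the additional bookkeeping of including the delimiter events $\{e,f\}$ of Fig.~\ref{fig:operators} identically on both sides, matched by the point distributions $\delta_e$ and $\delta_f$.

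The main obstacle is the Kleene star, which is defined as the $\omega$-limit of the chain $(\EE*_{\le i}\G)_i$. Using the precongruence results for $+$ and $\cdot$ just established, I build by induction on $i$ a chain of simulations $S_i:\EE*_{\le i}\G\refby\FF*_{\le i}\G$ compatible with the $\prefix$-inclusions, i.e.\ $S_i\subseteq S_{i+1}$ as relations on configurations; such compatibility is achievable because each approximant is obtained from the previous one by applying $+$ and $\cdot$ with fresh copies of $\EE$ and $\FF$, and at each inductive step I reuse the simulations already constructed on the smaller sub-structure. The limit $S^*=\cup_i S_i$ is then a simulation $\EE*\G\refby\FF*\G$: every configuration of $\EE*\G$ already lies in some finite approximant $\EE*_{\le i}\G$, and the prefix, final-event and implementation checks reduce to those at the finite stages, using that each $\pi*_{\le i}\rho$ is a subset of $\pi*\rho$. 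The symmetric direction $(\G,\eta)\bullet(\EE,\pi)\refby(\G,\eta)\bullet(\FF,\rho)$ proceeds along the same lines with the roles of $\EE$ and $\G$ swapped; for $\cdot$ and $*$ this requires an independent but parallel construction, while for $+$ and $\|$ the construction is essentially identical.
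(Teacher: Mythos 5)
Your constructions for $+$, $\cdot$ and $\|$ are correct and essentially the ones the paper uses: the paper builds exactly your componentwise relation $(x_\EE\cup x_\G\cup z,\sum_i\alpha_i\delta_{y_i\cup x_\G\cup z})$ for $\|$ and then reuses it for $\cdot$ via $\C(\EE\cdot\G)\subseteq\C(\EE\|\G)$, whereas you write the $\cdot$ case out directly; your appeal to the final-event clause of $S$ to show that a $\G$-step enabled after $x_\EE$ is also enabled after each $y_i$ is precisely the point the paper glosses over. Where you genuinely diverge is the Kleene star. You prove monotonicity of $*$ by constructing a compatible chain of simulations $S_i\colon\EE*_{\le i}\G\refby\FF*_{\le i}\G$ and taking the union; the paper instead derives it algebraically from the star induction law~(\ref{eq:induction}) (established independently in Proposition~\ref{pro:kleene-star}) together with monotonicity of $+$ and $\cdot$: from $\EE\refby\FF$ one gets $\G+\EE\cdot(\FF*\G)\refby\G+\FF\cdot(\FF*\G)\refby\FF*\G$, whence $\EE*\G\refby\FF*\G$. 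The paper's route costs nothing extra because the union-of-compatible-simulations machinery is already needed to prove~(\ref{eq:induction}) itself, and it avoids re-justifying the compatibility claim $S_i\subseteq S_{i+1}$; your route is self-contained and does not presuppose the induction axiom, but it does rest on that compatibility assertion, which you (like the paper in its own star proof) leave at the level of plausibility rather than proof. Both arguments are sound at the paper's level of rigour.
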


\begin{proof}[Sketch]
Let $(\EE,\pi)\refby(\FF,\rho)$ be witnessed by a simulation $S\subseteq\C(\EE)\times\D(\C(\FF))$ and $(\G,\eta)$ be any pBES. The congruence properties are proven by extending the simulation $S$ to the events of $\G$. For instance, That $(\EE,\pi) + (\G,\eta)\refby (\FF,\rho) + (\G,\eta)$ is deduced by showing that $S\cup\{(x,\delta_x)\ |\ x\in\C(\G)\}$ is indeed a simulation. 

\qed
\end{proof}

The axioms~(\ref{eq:+-idem}-\ref{eq:seq-zero}) and~(\ref{eq:par-comm}-\ref{eq:+-dist-seq}) are proven using simulations akin to the interleaving case~\cite{Rab13,Den07a}. The existence of simulations that establishes axiom~(\ref{eq:par-unit})  is clear from the definition of $\|$ and $1$. It follows from the axioms of $+$ and Proposition~\ref{pro:precongruence} that $(\EE,\pi)\refby (\FF,\rho)$ if and only if $(\EE,\pi) + (\FF,\rho)\equiv(\FF,\rho)$.

\begin{proposition}\label{pro:subdistributivity}
The axioms~(\ref{eq:+-subdist-seq},\ref{eq:pc-supdist-seq}) and (\ref{eq:+-subdist-par},\ref{eq:pc-supdist-par}) and the interchange law~(\ref{eq:interchange-law}) hold on $\pBES$ modulo probabilistic simulation.
\end{proposition}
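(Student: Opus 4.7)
The plan is to exhibit, for each of the five inequalities, an explicit probabilistic simulation $S$ and verify the four clauses of Definition~\ref{def:simulation}. In every case, the event set of the right-hand side differs from that of the left-hand side only by duplication (where $+$ or $\pc{\alpha}$ introduces disjoint copies of an operand) or by fresh $\|$-delimiters. I fix a canonical identification $\iota$ between the two and use it to describe the candidate simulations.

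For the nondeterministic sub-distributivities~(\ref{eq:+-subdist-seq}) and~(\ref{eq:+-subdist-par}), the duplication sits on the left: the LHS carries two disjoint copies $\EE_1,\EE_2$ of $\EE$, and since the initial events of the two summands of $+$ are in immediate conflict, every configuration of the LHS lies inside a unique branch $i\in\{1,2\}$. I would take the point relation $S=\{(x,\delta_{\iota(x)})\mid x\in\C(\text{LHS})\}$. Clauses~(ii) and~(iv) hold because $\iota$ is a label- and $\exit$-preserving order isomorphism on $\hat{x}$; clause~(iii) follows because $\pi_{\text{LHS}}$ and $\pi_{\text{RHS}}$ are in canonical bijection under $\iota$, so each prefix step on the left has a one-step matching on the right.

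For the probabilistic sup-distributivities~(\ref{eq:pc-supdist-seq}) and~(\ref{eq:pc-supdist-par}), the duplication is on the right: $\EE\cdot\FF\pc{\alpha}\EE\cdot\G$ contains copies $\EE_1,\EE_2$ of $\EE$ whose initial clusters are merged by $\pc{\alpha}$ into a single cluster. I would take $S$ to consist of the pairs $(x,(1-\alpha)\delta_{\iota_1(x)}+\alpha\delta_{\iota_2(x)})$ while $x$ lies inside $\EE$, and the pairs $(x,\delta_{\iota_i(x)})$ once $x$ has advanced into $\FF$ or $\G$, with the index $i$ fixed by that advance. The delicate case of clause~(iii) is the step on the left that resolves the $\FF$-vs-$\G$ choice: its matching lifted $\ov{\prefix}$ step on the right must advance both elements of the two-point support of $\Theta$ at once, one into the $\FF$-cluster of $\EE_1\cdot\FF$ and the other into the $\G$-cluster of $\EE_2\cdot\G$, and Proposition~\ref{pro:lifting} ensures that the weighted sum reconstructs $\Delta'$. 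All earlier steps are handled by the construction of $\pc{\alpha}$: while $x$ remains inside $\EE$, the merged distribution $(1-\alpha)p_1+\alpha p_2$ belongs to $\pi_{\text{RHS}}$ by fiat.

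The interchange law~(\ref{eq:interchange-law}) is where the true-concurrent structure genuinely contributes. The event sets of the two sides are isomorphic up to $\|$-delimiter renaming, so I would again use the point simulation $S=\{(x,\delta_{\iota(x)})\}$. Clauses~(i),~(iii) and~(iv) are routine, since probability plays no role. The crux is clause~(ii), the subsumption $x\refby_s\iota(x)$: on the left, the outer $\cdot$ adds bundles $\exit_{\EE\|\FF}\mapsto e$ for every initial event $e$ of $\EE'\|\FF'$, forcing every $\EE$- and $\FF$-event to precede every $\EE'$- and $\FF'$-event; on the right, only the intra-branch orderings $\EE\to\EE'$ and $\FF\to\FF'$ are enforced. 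Hence the order on the right embeds into that on the left, and $\iota^{-1}$ restricts to a monotonic label-preserving bijection $\widehat{\iota(x)}\to\hat{x}$. I expect the main obstacle to be the bookkeeping of the unlabelled $\|$-delimiters under the $\widehat{\cdot}$-reduction: one must verify that deleting the delimiters yields exactly the claimed additional comparabilities on the left and no spurious ones, and that no delimiter blocks a prefix step needed to match the left in clause~(iii).
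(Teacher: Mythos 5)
Your proposal is correct and follows essentially the same route as the paper: point simulations for the $+$-subdistributivities and the interchange law, and for the $\pc{\alpha}$-laws the two-phase relation pairing $x$ with $(1-\alpha)\delta_{x_1}+\alpha\delta_{x_2}$ while $x$ stays inside $\EE$ and with $\delta_{x_i}$ once the branch is resolved, with Proposition~\ref{pro:lifting} discharging the branching step. The only cosmetic difference is that for the interchange law the two sides are not in bijection --- the left carries the extra unlabelled delimiters $f$ and $e'$ --- so the paper's witness is the deletion map $y=x\setminus\{f,e'\}$ rather than a renaming $\iota$, which also settles the delimiter bookkeeping you flag: steps adding $f$ or $e'$ are matched by the empty $\ov{\prefix}^*$ step.
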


\begin{proof}[Sketch]
These equations are proven by the usual simulation constructions.
\qed
\end{proof}

\begin{proposition}\label{pro:kleene-star}
The binary Kleene star satisfies the axioms~(\ref{eq:unfold}) and~(\ref{eq:induction}).
\end{proposition}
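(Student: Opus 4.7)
The plan is to treat the two axioms separately, exploiting the construction of $\EE*\FF$ as the $\omega$-limit of $\EE*_{\le i}\FF$ and the precongruence of $\refby$ established in Proposition~\ref{pro:precongruence}.

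For the unfold axiom~(\ref{eq:unfold}), I would observe that both $\EE*\FF$ and $\FF + \EE\cdot(\EE*\FF)$ are obtained as disjoint unions of fresh copies of $\EE$ and $\FF$ assembled in essentially the same pattern: the right hand side expands to $\FF_0 + \EE_0\cdot(\FF_1 + \EE_1\cdot(\FF_2 + \EE_2\cdot \cdots))$, which, up to event renaming, coincides with the $\omega$-limit defining $\EE*\FF$. Mutual simulations are therefore obtained by the relation $\{(x,\delta_{\varphi(x)})\ |\ x\in\C(\cdot)\}$ where $\varphi$ is the canonical bijection, after verifying that $\varphi$ preserves labels, $\preceq$, final events, bundles, clusters, and probability distributions. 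The required simulation conditions reduce to routine checks on the generators of the $\omega$-chain.

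For the induction axiom~(\ref{eq:induction}), assume $\G + \EE\cdot\FF \refby \FF$. The core plan is to prove $\EE*_{\le i}\G \refby \FF$ by induction on $i\in\N$ and then pass to the limit. The base case $i=0$ is $\G\refby \G + \EE\cdot\FF \refby \FF$, where the first step is the canonical injection simulation for $+$ and the second is the hypothesis. For the inductive step, Proposition~\ref{pro:precongruence} gives $\EE\cdot(\EE*_{\le i}\G)\refby\EE\cdot\FF$, whence $\EE*_{\le i+1}\G = \G + \EE\cdot(\EE*_{\le i}\G)\refby \G + \EE\cdot\FF\refby\FF$ by a second application of precongruence and transitivity (Proposition~\ref{pro:preorder}). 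Importantly, the construction of $S_{i+1}$ must reuse $S_i$ verbatim on the events shared with $\EE*_{\le i}\G$ and only add pairs for the newly introduced copies; this is possible because $\EE*_{\le i}\G\prefix\EE*_{\le i+1}\G$ and the congruence argument on $+$ and $\cdot$ acts identically on the common sub-pBES.

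I would then assemble $S = \bigcup_i S_i \subseteq \C(\EE*\G)\times\D(\C(\FF))$. Since every finite configuration of $\EE*\G$ involves only finitely many unrollings, it lies in $\C(\EE*_{\le i}\G)$ for some $i$, so $S$ has enough pairs. The simulation clauses at $x\in\C(\EE*_{\le i}\G)$ transfer directly from $S_i$; the probabilistic prefix condition is handled using Proposition~\ref{pro:lifting}, noting that any transition $x\prefix\Delta'$ in $\EE*\G$ has its support within some $\EE*_{\le j}\G$ with $j\ge i$, and the matching $\Theta\ov{\prefix}^*\Theta'$ provided by $S_j$ remains valid in $S$. The main obstacle is exactly this coherence issue: a naive independent construction of each $S_i$ could lead to a union that picks incompatible matchings on the $\FF$ side for the same source configuration, breaking the simulation clauses. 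The monotone inductive construction outlined above sidesteps this, and the preservation of final events follows since $\exit_{\EE*\G}=\bigcup_i \exit_{\EE*_{\le i}\G}$.
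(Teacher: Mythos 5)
Your proposal is correct and follows essentially the same route as the paper: the unfold law via the canonical renaming simulation, and the induction law by establishing $\EE*_{\le i}\G\refby\FF$ inductively from the hypothesis and precongruence, arranging each simulation $S^{(i)}$ to restrict to $S^{(i-1)}$ on the sub-pBES $\EE*_{\le i-1}\G$, and then verifying that the union is a simulation using the finiteness of configurations. Your explicit remark about the coherence of the family of simulations is exactly the point the paper relies on when it asserts that such a compatible family can be found.
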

\begin{proof}[Sketch]
The first equation is proven using the standard simulation construction. For the second one, let $S\subseteq\C(\EE\cdot\FF)\times\D(\C(\FF))$ be a probabilistic simulation from $(\EE,\pi)\cdot(\FF,\rho)$ to $(\FF,\pi)$. By monotonicity of $\cdot$ and $+$, there exists a simulation $S^{(i)}\subseteq\C(\EE*_{\le i}\FF)\times\D(\C(\FF))$ from $(\EE,\pi)*_{\le i}(\FF,\rho)$ to $(\FF,\rho)$, for every $i\in\N$. Moreover, we can find a family of simulations such that $S^{(i-1)}$ is the restriction of $S^{(i)}$ to $(\EE,\pi)*_{\le i-1}(\FF,\rho)$. Thus, we can consider the reunion $S = \cup_i S^{(i)}$ and show that it is indeed a simulation from $(\EE,\pi)*(\FF,\rho)$ to $(\FF,\rho)$. Hence, Equation~(\ref{eq:induction}) holds.\qed
\end{proof}

\begin{theorem}
The set $\pBES$ modulo probabilistic simulation forms a probabilistic concurrent Kleene algebra with a binary Kleene star.
\end{theorem}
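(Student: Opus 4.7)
The plan is to assemble the pieces established in Sections~\ref{operations}--\ref{pcka} and observe that together they give exactly the signature and axioms of a probabilistic concurrent Kleene algebra with a binary Kleene star. First I would recall that, by Proposition~\ref{pro:preorder}, $\refby$ is a preorder on $\pBES$, and by Proposition~\ref{pro:precongruence} it is also a precongruence with respect to each of $+$, $\pc{\alpha}$, $\cdot$, $\|$ and $*$. Defining $\equiv$ as $\refby\cap\refby^{-1}$ therefore yields a congruence, the operators lift unambiguously to the quotient $\pBES/{\equiv}$, and $\refby$ descends to a genuine partial order on that quotient. In particular the characterisation that $(\EE,\pi)\refby(\FF,\rho)$ iff $(\EE,\pi)+(\FF,\rho)\equiv(\FF,\rho)$ turns the partial order into the natural semilattice order required by a Kleene algebra.

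Next I would verify the axioms of Figure~\ref{fig:axioms} block by block. The semilattice laws~(\ref{eq:+-idem})--(\ref{eq:+-zero}) for $+$, the monoid laws~(\ref{eq:cdot-assoc})--(\ref{eq:seq-zero}) for $\cdot$, the commutative monoid laws~(\ref{eq:par-unit})--(\ref{eq:par-assoc}) for $\|$, and the left distributive law~(\ref{eq:+-dist-seq}) of $\cdot$ over $+$ are all obtained by exhibiting explicit simulations of the shape $\{(x,\delta_{\phi(x)})\}$, where $\phi$ is a canonical bijection on configurations induced by an isomorphism of the underlying BES; these constructions are the same as in the non-probabilistic and interleaving cases~\cite{Rab13,Den07a}. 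The probabilistic laws~(\ref{eq:pc-idem})--(\ref{eq:pc-assoc}) together with the commutativity and right-distributivity laws of $\pc{\alpha}$ are handled in the same style, the additional ingredient being that, since supports of probability distributions lie in a single cluster, the definition of $\pi\pc{\alpha}\rho$ unfolds so that both sides of each identity produce the same reweighted distribution. The subdistributivity and interchange laws~(\ref{eq:+-subdist-seq}), (\ref{eq:pc-supdist-seq}), (\ref{eq:+-subdist-par}), (\ref{eq:pc-supdist-par}) and (\ref{eq:interchange-law}) are exactly Proposition~\ref{pro:subdistributivity}, and the Kleene unfold and induction axioms~(\ref{eq:unfold})--(\ref{eq:induction}) are Proposition~\ref{pro:kleene-star}.

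I expect the only delicate new verification to be the associativity of probabilistic choice, equation~(\ref{eq:pc-assoc}). Because $\pc{\alpha}$ is defined by a case split on whether the support of a candidate distribution lies in $\init(\EE)\cup\init(\FF)$, one must check that the two bracketings $\EE\pc{\alpha}(\FF\pc{\beta}\G)$ and $(\EE\pc{\frac{\alpha(1-\beta)}{1-\alpha\beta}}\FF)\pc{\alpha\beta}\G$ yield, on every initial cluster, the same reweighted mixture of initial distributions of $\EE$, $\FF$ and $\G$; the arithmetic identity $(1-\alpha)+\alpha(1-\beta)=1-\alpha\beta$ then guarantees that the coefficients line up, while the case for non-initial clusters reduces immediately to $\pi\cup\rho\cup\eta$ on both sides. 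Once this is in place, the remaining cases all follow the simulation templates already illustrated in the sketches of Propositions~\ref{pro:subdistributivity} and~\ref{pro:kleene-star}, and the theorem follows by packaging these facts together.
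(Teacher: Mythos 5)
Your proposal is correct and follows essentially the same route as the paper: the theorem is obtained by assembling Propositions~\ref{pro:preorder}, \ref{pro:precongruence}, \ref{pro:subdistributivity} and \ref{pro:kleene-star} together with the simulation constructions for the remaining axioms of Fig.~\ref{fig:axioms}, exactly as you describe. Your additional attention to the associativity of $\pc{\alpha}$ and to the need for $\refby$ to be a precongruence with respect to $\pc{\alpha}$ (which the paper's Proposition~\ref{pro:precongruence} does not explicitly list) is a welcome refinement but not a different argument.
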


\section{Conclusion}

We have constructed a truly concurrent model for probabilistic concurrent Kleene algebra using pBES. In the process, we also set out a notion of probabilistic simulation for these event structures. The semantics of pBES was defined by constructing the configuration-trees using prefixing and probabilistic simulations are exhibited when possible. Since the simulation distinguishes between concurrency and interleaving, we believe that it provides a suitable combination of nondeterminism, probability and true-concurrency.

Our main result is the soundness of pCKA axioms. The completeness of such an axiom system is still open. We believe that other axioms such as guarded tail recursion are needed to achieve a complete characterisation as in~\cite{Seg04}. Another interesting specialisation of this work is the labelling of events with one-step probabilistic programs. These however require further studies.

\bibliographystyle{splncs}
\bibliography{lpar19-paper69}

\newpage
\section*{Appendix}

\renewenvironment{theorem}[2][Theorem]{\begin{trivlist}
\item[\hskip \labelsep {\bfseries #1}\hskip \labelsep {\bfseries #2}]}{\end{trivlist}}
\newenvironment{repproposition}[2][Proposition]{\begin{trivlist}
\item[\hskip \labelsep {\bfseries #1}\hskip \labelsep {\bfseries #2}]}{\end{trivlist}}

In this appendix, we denote event traces simply by the Greek letters $\alpha,\beta,\dots$ and $\ov\alpha$ is the set of events occurring in the event trace $\alpha$. 

\subsection{Proof Complement for Proposition~\ref{pro:configuration-prefix}}

\begin{lemma}\label{pro:trace-restriction}
Let $\alpha\in\TT(\EE)$ and $x\in\C(\EE)$ such that $x\subseteq\bar\alpha$, then the restriction $\alpha|_{x}$ of $\alpha$ to events in $x$ is an event trace.
\end{lemma}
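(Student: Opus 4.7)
The plan is to show that the three defining conditions of an event trace, namely absence of duplicates, conflict-freeness of each prefix, and the bundle witness property, are all inherited by $\alpha|_x$ from $\alpha$ and from the configuration status of $x$.

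Write $\alpha|_x = e_{i_1}e_{i_2}\cdots e_{i_k}$ with $i_1<\cdots<i_k$ and $\{e_{i_1},\dots,e_{i_k}\}=x$. Absence of duplicates in $\alpha|_x$ follows immediately from absence of duplicates in $\alpha$. The conflict-freeness condition $e_{i_m}\notin\cfl(\{e_{i_1},\dots,e_{i_{m-1}}\})$ is also immediate, since $\{e_{i_1},\dots,e_{i_m}\}\subseteq\{e_1,\dots,e_{i_m}\}$ and that prefix of $\alpha$ is already conflict-free because $\alpha$ is an event trace.

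The main obstacle, and the only nontrivial step, is verifying the bundle condition: given a bundle $y\mapsto e_{i_m}$, I must exhibit some index $j<m$ with $e_{i_j}\in y$, i.e.~a witness that actually lies in $x$. My plan is a pincer argument using both traces. From the trace property of $\alpha$ applied at $e_{i_m}$, I obtain some $e_{j'}\in y$ with $j'<i_m$. Separately, since $x\in\C(\EE)$, pick a linearisation $\beta$ of $x$; applying the trace property of $\beta$ at $e_{i_m}$ (which occurs in $\beta$) yields some $e''\in y\cap x$. Both $e_{j'}$ and $e''$ belong to $y$, and the bundle axiom of Definition~\ref{def:bes} forces $y\#y$, so either $e_{j'}=e''$ or $e_{j'}\# e''$. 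The second possibility is ruled out because $e_{j'}$ appears in $\alpha$ by construction, while $e''\in x\subseteq\bar\alpha$ also appears in $\alpha$, and no two distinct events of $\alpha$ can be in conflict. Hence $e_{j'}=e''\in x$, so $e_{j'}=e_{i_j}$ for the unique $j$ with $i_j=j'$; since $i_j=j'<i_m$, we get $j<m$ as required.

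With the three conditions established, $\alpha|_x$ is an event trace, completing the proof. The only subtlety to double-check is the uniqueness step: that exactly one event of $y$ occurs before $e_{i_m}$ in $\alpha$, which is precisely what the combination of $y\#y$ and the conflict-freeness of $\alpha$ guarantees, and this is what makes the identification $e_{j'}=e''$ unavoidable.
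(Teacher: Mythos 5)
Your proof is correct and takes essentially the same route as the paper's: both obtain one witness for the bundle $y\mapsto e_{i_m}$ from the trace property of $\alpha$ and a second witness in $x$ from a linearisation of the configuration, then identify the two using $y\# y$ together with the conflict-freeness of $\alpha$. Your write-up is in fact slightly more careful than the paper's (which compresses the linearisation step and the final ordering argument $i_j=j'<i_m\Rightarrow j<m$), but no new idea is involved.
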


\begin{proof}
Let $\alpha = e_1e_2\cdots e_n$, $x\in\C(\EE)$ and $\alpha|_{x} = e_{i_1}e_{i_2}\cdots e_{i_m}$. Let $e_{i_k}\in x$ and $z\mapsto e_{i_k}$ be a bundle of $\EE$. Since $\alpha$ is an event trace, there exists a even $e_j$ such that $e_j\in z$ and $j<i_k$. Since $x$ is a configuration and $e_{i_k}\in x$, there exists $e_{i_l}\in z$ and $l<k$. By definition, the bundle set $z$ contains mutually conflicting events only and since $y$ is conflict free, $e_{i_l} = e_j$. Hence, $\alpha|_{x}$ is an event trace.  \qed
\end{proof}

\begin{lemma}\label{pro:trace-extension}
Let $x\in\C(\EE)$ and $y\in\C(\EE)$ such that $x\subseteq y$, for every trace event $\alpha$ such that $\ov\alpha = x$ there exists a trace event $\alpha'$ such that $\ov{\alpha'} = y$ and $\alpha'|_{x} = \alpha$.
\end{lemma}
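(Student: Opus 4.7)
The plan is to argue by induction on the cardinality $n = |y \setminus x|$. The base case $n=0$ is trivial: take $\alpha' = \alpha$, since $x = y$ forces $\ov{\alpha}= y$ and the restriction is $\alpha$ itself.

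For the inductive step, assume the lemma holds whenever $|y\setminus x|\leq n$ and consider $x\subseteq y$ with $|y\setminus x|=n+1$. The crucial move is to locate a single event $e\in y\setminus x$ that can be appended to $\alpha$ while preserving the event-trace property; the induction hypothesis then takes care of the remainder. To find such an $e$, I would fix any linearisation $\beta = f_1 f_2\cdots f_k$ of $y$ (which exists since $y\in\C(\EE)$) and let $j$ be the least index with $f_j\in y\setminus x$, setting $e := f_j$. By minimality of $j$, all of $f_1,\dots,f_{j-1}$ lie in $x=\ov\alpha$.

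I then claim $\alpha\cdot e$ is an event trace with underlying set $x\cup\{e\}$. The trace conditions on the prefix $\alpha$ are inherited. For the new event $e$, two things must be checked: (i) for every bundle $z\mapsto e$, some earlier event in $\alpha\cdot e$ lies in $z$; and (ii) $e\notin\cfl(\ov\alpha)\cup\ov\alpha$. For (i), since $\beta$ is a trace and $z\mapsto e = f_j$, some $f_i\in z$ occurs with $i<j$; by choice of $j$, $f_i\in x=\ov\alpha$, so $f_i$ occurs in $\alpha$. For (ii), $e\notin\ov\alpha$ because $e\in y\setminus x$, and $e\notin\cfl(x)$ because $x\cup\{e\}\subseteq y$ and $y$, being a configuration, is conflict-free. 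Consequently $x':=x\cup\{e\}\in\C(\EE)$ with $x'\subseteq y$ and $|y\setminus x'|=n$.

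Applying the induction hypothesis to $\alpha\cdot e$ and the pair $x'\subseteq y$ produces an event trace $\alpha''$ with $\ov{\alpha''}=y$ and $\alpha''|_{x'}=\alpha\cdot e$. Taking $\alpha':=\alpha''$, the restriction $\alpha'|_x$ coincides with $(\alpha\cdot e)|_x = \alpha$ because $e\notin x$. The only subtle step is the selection of $e$ via a linearisation of $y$; without picking the first event of $\beta$ outside $x$, the bundle condition for $e$ could fail. Everything else follows from Definitions~\ref{def:bes} and~\ref{def:configuration} together with the conflict-freeness of configurations.
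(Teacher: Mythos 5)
Your proof is correct and, once the induction is unrolled (keeping the same linearisation $\beta$ of $y$ throughout), it produces exactly the trace $\alpha\beta_2$ that the paper constructs directly, where $\beta_2$ lists $y\setminus x$ in the order inherited from $\beta$; your choice of the $\beta$-least event of $y\setminus x$ at each step is precisely what makes the bundle condition go through, just as in the paper's argument. The only difference is organisational: you verify one appended event at a time by induction on $|y\setminus x|$, whereas the paper checks the whole concatenation $\alpha\beta_2$ in one pass.
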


\begin{proof}
Let $\alpha,\beta$ be any event traces such that $\ov\alpha = x$, $\ov{\beta} = y$ and $x\subseteq y$. Let $\beta'$ bet the concatenation of two sequences $\beta_1\beta_2$ where events in $\beta_1$ are exactly those of $x$ ordered with $\preceq_\beta$ and $\beta_2$ is composed of events from $y\setminus x$ ordered again with $\preceq_\beta$. We now show that $\beta'$ is an event trace. That $\beta'$ is conflict free comes from the configuration $y$. Let $z\mapsto e_1$ be a bundle of $\EE$ such that $e_1\in \beta_1$. Since $\ov\beta_1 = x$ is a configuration, $z\cap \beta_1\neq\emptyset$ and that element has to be ordered before $e_1$ with respect to $\preceq_\beta$ because $z$ contains mutually conflicting events so $z\cap\beta_1 = z\cap\beta$ contains exactly one event. That is, $\beta_1$ is an event trace. As for $\beta_2$, let $z\mapsto e_2$ be a bundle and $e_2\in\beta_2$. Since $y$ is a configuration, we have $z\cap\beta\neq\emptyset$ and the sole event in that intersection is ordered before $e_2$ in the event trace $\beta_1\beta_2$ because $\preceq_{\beta_2}\subseteq\preceq_\beta$. Hence $\beta'$ is an event trace. 

Finally, let $\beta'' = \alpha\beta_2$. With the same argument as before, we can show that $\beta''$ is an event trace and hence $\beta''|_{x} = \alpha$.\qed
\end{proof}

\begin{repproposition}{\ref{pro:configuration-prefix}.}
Let $\EE$ be a BES, if $x\in\C(\EE)$, $y\in\C(\EE)$ and $x\subseteq y$ then $(x,\preceq_x,\lambda_x)\prefix (y,\preceq_y,\lambda_y)$.
\end{repproposition}

\begin{proof}
Let $x\subseteq y$. Let us first show that $\preceq_x = \preceq_y\cap(x\times x)$. Let $e,e'\in x$ such that  $e\preceq_xe'$. Lemma~\ref{pro:trace-restriction} implies that $e\preceq_y e'$ because every event trace for $y$ restricts to an event trace for $x$. For the converse inclusion, let $e,e'\in x$ such that $e\preceq_ye'$. Lemma~\ref{pro:trace-extension} implies that every event trace for $x$ can be obtained as a restriction of some event trace for $y$. Hence, $e\preceq_x e'$. Therefore $\preceq_x = \preceq_y\cap(x\times x)$.

Let $e,e'\in y$, $e\preceq_y e'$ and $e'\in x$. It now suffices to show that $e\in x$. In fact, if $e\notin x$, then there exists an event trace $\beta' = \beta_1\beta_2$ as specified in the proof of Lemma~\ref{pro:trace-extension}, that is, $\ov{\beta'} = y$, $\ov{\beta_1} = x$ and $e\in\ov{\beta_2}$. Therefore, $e\npreceq_{\beta'} e'$ which contradict the fact that $e\preceq_ye'$. \qed
\end{proof}

\subsection{Proof Complement for Proposition~\ref{pro:omega-completeness} and Properties of $\prefix$}

\begin{repproposition}{\ref{pro:omega-completeness}}
$(\BES,\prefix)$ is an $\omega$-complete partially ordered set.
\end{repproposition}

\begin{proof}
Firstly, we prove that $\prefix$ is a partial order. It is clear that $\prefix$ is reflexive. To prove antisymmetry, Let $\EE\prefix\EE'$ and $\EE'\prefix\EE$, then $E = E'$. Let $z\mapsto e$ is a bundle of $\EE$. Since $E = E'$, we have $e\in E'$ and hence $z\subseteq E'$ and $z\mapsto' e$ i.e. it is also a bundle of $\EE'$. The fact that $\#$ and $\#'$ (resp. $\lambda$ and $\lambda'$) coincide follows directly from the definition. To prove transitivity, let $\EE\prefix\EE'$ and $\EE'\prefix\EE''$. We need to prove that $\EE\prefix\EE''$. It is clear that $E\subseteq E''$. Let  $z\mapsto'' e$ be a bundle of $\EE''$ and $e\in E$. Since $E\subseteq E'$, we have $e\in E'$ and since $\EE'\prefix\EE''$, we obtain $z\subseteq E'$ and $z\mapsto' e$ is a bundle of $\EE'$. Since $e\in E$ and $\EE\prefix \EE'$, we have $z\subseteq E$ and $z\mapsto e$ is a bundle of $\EE$. The properties $\# = \#''\cap E\times E$ and $\lambda = \lambda''\cap E\times\Sigma$ and $\exit = \exit''\cap E$ follows from similar argument. Hence $\EE\prefix\EE''$.

Secondly, let $\EE_0\prefix\EE_1\prefix\EE_2\prefix\cdots$ be a countable increasing chain of BES and let $\EE = \cup_i\EE_i$ endowed with the following components:
\begin{itemize}
\item set of events: $E = \cup_i E_i$,
\item conflict relation: $\# = \cup_i\#_i$,
\item bundle relation: $\mapsto = \cup_i\mapsto_i$,
\item labelling function: $\lambda = \cup_i\lambda_i$,
\item final events: $\exit = \cup_i\exit_i$.
\end{itemize}
We show that $\EE_i\prefix\EE$ for all $i$ and if $\EE_i\prefix\FF$ for all $i$ then $\EE\prefix\FF$.

Let $i\in\N$, we have $E_i\subseteq E$ by construction. Let $(e,e')\in\#\cap (E_i\times E_i)$. Since $\# = \cup_i\#_i$, there exists $j\in\N$ such that $e\#_je'$. There are two cases:
\begin{itemize}
\item if $j\leq i$, then $\#_j\subseteq\#_i$ and $e\#_ie'$,
\item if $i<j$, then $\EE_i\prefix \EE_j$ and hence $\#_j\cap(E_i\times E_i) = \#_i$ i.e. $e\#_i e'$.
\end{itemize}
A similar argument can be used to prove $\lambda_i = \lambda\cap(E_i\times\Sigma)$, $\exit_i = \exit \cap{E_i}$ and the relationship between bundles of $\EE$ and $\EE_i$.

Finally, let $\EE_i\prefix \FF$ for all $i$. We need to show that $\EE\prefix\FF$. It is clear that $E\subseteq F$ where $F$ is the set of events of $\FF$. 
\begin{itemize}
\item Let $(e,e')\in\#_\FF\cap(E\times E)$. By definition of $E$, there exists $i,j\in\N$ such that $e\in E_i$ and $e'\in E_j$. Assume that $i\leq j$, then $e,e'\in E_j$. Since $\EE_j\prefix\FF$, we have $e\#_je'$ and hence $e\#e'$. 
\item A similar argument can be used to prove $\lambda = \lambda_\FF\cap(\EE\times\Sigma)$ and $\exit = \exit_F\cap E$.
\item It is clear that $\mapsto\subseteq\mapsto_\FF$. Let $z\mapsto_\FF e$ be a bundle of $\FF$ and $e\in E$. There exists $i\in\N$ such that $e\in E_i$ and since $\EE_i\prefix\FF$, we deduce that $z\subseteq E_i$ and $z\mapsto_i e$ is a bundle of $\EE_i$. Hence, $z\mapsto e$ is a bundle of $\EE$.\qed
\end{itemize}
\end{proof}

\begin{proposition}\label{pro:prefix-trace}
Let $\EE,\EE'$ be two BES such that $\EE\prefix\EE'$, then $\TT(\EE) = \{\alpha\ |\ \alpha\in\TT(\EE')\wedge \ov\alpha\subseteq E\}$.
\end{proposition}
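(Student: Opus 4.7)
The plan is to prove the two inclusions separately by unfolding the definition of an event trace (Definition~\ref{def:bes} together with the paragraph preceding Definition~\ref{def:configuration}) and using each of the six defining clauses of $\prefix$. Both directions are direct, but they rely on different clauses of the prefix relation, so it is worth keeping them separate. Throughout, one writes $\cfl_\EE$ and $\cfl_{\EE'}$ for the conflict operators induced by $\#_\EE$ and $\#_{\EE'}$, and observes that whenever $x\subseteq E$ and $e\in E$, the equality $\#_\EE = \#_{\EE'}\cap(E\times E)$ gives $e\in\cfl_\EE(x)\iff e\in\cfl_{\EE'}(x)$. This little observation will handle all the conflict-related side conditions uniformly.

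For the inclusion $\TT(\EE)\subseteq\{\alpha\in\TT(\EE')\mid\ov\alpha\subseteq E\}$, I take $\alpha=e_1\cdots e_n\in\TT(\EE)$ and verify the two trace conditions in $\EE'$ inductively for each $e_i$. Given any bundle $y\mapsto_{\EE'}e_i$ of $\EE'$, the clause $x\mapsto_\FF e\wedge e\in E\Rightarrow x\subseteq E\wedge x\mapsto_\EE e$ forces $y\mapsto_\EE e_i$ to be a bundle already present in $\EE$; since $\alpha$ is a trace in $\EE$, some $e_j\in y$ with $j<i$ exists. For the no-conflict condition, the preceding observation immediately transfers the absence of conflicts from $\EE$ to $\EE'$ (as all events involved lie in $E$), and repetition-freeness is purely combinatorial. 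Finally $\ov\alpha\subseteq E$ is part of the hypothesis.

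For the converse inclusion, I take $\alpha=e_1\cdots e_n\in\TT(\EE')$ with $\ov\alpha\subseteq E$ and check the trace conditions in $\EE$. If $y\mapsto_\EE e_i$ is a bundle of $\EE$, then $\mapsto_\EE\subseteq\mapsto_{\EE'}$ yields $y\mapsto_{\EE'} e_i$, and since $\alpha\in\TT(\EE')$ some $e_j\in y$ with $j<i$ exists. The conflict condition again follows from the observation above, now used in the easy direction (any $\#_\EE$-conflict between events of $\ov\alpha$ would propagate to a $\#_{\EE'}$-conflict, contradicting $\alpha\in\TT(\EE')$).

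I do not expect a genuine obstacle: both directions reduce to a careful matching of each line in Definition~\ref{def:bes} against the corresponding clause in the definition of $\prefix$. The only place where a slightly nontrivial use of the prefix axioms appears is the forward direction, where one must invoke the clause that every $\EE'$-bundle landing on an $E$-event is already an $\EE$-bundle; without this clause the statement would fail, and pointing out its role is the one subtle remark I would highlight in writing up the proof.
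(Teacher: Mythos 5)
Your proof is correct and follows essentially the same route as the paper's: both directions are handled by matching each trace condition against the clauses of $\prefix$, using $\mapsto_\EE\subseteq\mapsto_{\EE'}$ for one inclusion, the bundle-restriction clause ($x\mapsto_{\EE'}e$ and $e\in E$ implies $x\mapsto_\EE e$) for the other, and $\#_\EE=\#_{\EE'}\cap(E\times E)$ for the conflict conditions throughout. The remark you single out about the necessity of the bundle-restriction clause is exactly the step the paper's proof also leans on.
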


\begin{proof}
Let $\alpha\in\TT(\EE')$ such that $\ov\alpha\subseteq E$ and let us show that $\alpha\in\TT(\EE)$. Let us write $\alpha = e_1e_2\cdots e_n$. By definition of an event trace, we have $z\cap\{e_1,\dots,e_{i-1}\} \neq\emptyset$ for every bundle $z\mapsto'e_i$ in $\EE'$ and since $\mapsto\subseteq\mapsto'$ we also have $y\cap\{e_1,\dots,e_{i-1}\}\neq\emptyset$ for every bundle $y\mapsto e_i$ in $\EE$. On the other hand, since $\#'\cap E\times E = \#$ and $\alpha$ is an event trace of $\EE'$, we have $e\notin\cfl_\EE(\{e_1,\dots,e_{i-1}\})\cup\{e_1,\dots,e_{i-1}\}$ and therefore, $\alpha\in\TT(\EE)$.

Conversely, let $\alpha = e_1e_2\cdots e_n\in\TT(\EE)$, we need to show that $\alpha\in\TT(\EE')$. Let $z\mapsto'e_i$ be a bundle of $\EE'$ where $e_i\in\ov\alpha$. Since $\EE\prefix\EE'$ and $e_i\in E$, we have $z\mapsto e$ is a bundle of $\EE$ and therefore $z\cap\{e_1,\dots,e_{i-1}\}\neq\emptyset$. Moreover, since $\#'\cap E\times E = \#$ and $\ov\alpha\subseteq E$, we deduce that $e_i\notin\cfl_{\EE'}(\{e_1,\dots,e_{i-1}\}$. Lastly, that $e_i\notin\{e_1,\dots,e_{i-1}\}$ follows directly from the fact that $\alpha$ is an event trace. Hence, $\alpha\in\TT(\EE')$.\qed
\end{proof}

\begin{corollary}\label{cor:prefix-refinement}
if $\EE\prefix\EE'$ then $\C(\EE) \subseteq \C(\EE')$ and $\L(\EE)\subseteq\L(\EE')$.
\end{corollary}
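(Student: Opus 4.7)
The plan is to derive both inclusions essentially as direct corollaries of Proposition~\ref{pro:prefix-trace}, which already characterises the traces of the smaller structure as exactly the traces of the larger structure whose events lie in $E$. The configurations inclusion is almost immediate, and the lposet inclusion needs only a little more care because one must check that not merely the underlying set $x$ but also its order and labelling coincide in $\EE$ and $\EE'$.

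First I would handle $\C(\EE)\subseteq\C(\EE')$. Let $x\in\C(\EE)$; by Definition~\ref{def:configuration}, there is an event trace $\alpha\in\TT(\EE)$ with $\ov\alpha=x$. Proposition~\ref{pro:prefix-trace} gives $\TT(\EE)\subseteq\TT(\EE')$, so $\alpha\in\TT(\EE')$, whence $x=\ov\alpha\in\C(\EE')$. This step is purely an unfolding of definitions.

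Next I would address $\L(\EE)\subseteq\L(\EE')$. Fix a lposet $(x,\preceq_x^\EE,\lambda_\EE|_x)\in\L(\EE)$. By the previous paragraph, $x\in\C(\EE')$, so the triple $(x,\preceq_x^{\EE'},\lambda_{\EE'}|_x)$ belongs to $\L(\EE')$; it remains to show these triples coincide. For the labels, the definition of $\prefix$ gives $\lambda_\EE=\lambda_{\EE'}|_E$, and since $x\subseteq E$, the two labellings agree on $x$. For the order, recall that $\preceq_x^\EE$ is the intersection of the relations $\preceq_\alpha$ over all linearisations $\alpha$ of $x$ in $\TT(\EE)$, and similarly for $\preceq_x^{\EE'}$. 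I claim the two sets of linearisations are equal: any $\alpha\in\TT(\EE')$ with $\ov\alpha=x$ satisfies $\ov\alpha\subseteq E$ and is therefore, by Proposition~\ref{pro:prefix-trace}, an element of $\TT(\EE)$; the converse inclusion $\TT(\EE)\subseteq\TT(\EE')$ is the other half of that same proposition. Consequently $\preceq_x^\EE=\preceq_x^{\EE'}$, completing the argument.

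There is no real obstacle here; the only point worth stressing is that the result depends on having already shown both directions in Proposition~\ref{pro:prefix-trace} (not only that $\TT(\EE)\subseteq\TT(\EE')$), because the equality of the lposet orders requires that no new linearisation of $x$ appears in $\EE'$ that was absent in $\EE$.
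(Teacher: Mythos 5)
Your proposal is correct and follows exactly the route the paper intends: the corollary is stated without proof as an immediate consequence of Proposition~\ref{pro:prefix-trace}, and your elaboration (configurations via $\TT(\EE)\subseteq\TT(\EE')$, and equality of the lposet orders via the coincidence of the sets of linearisations, which needs both directions of that proposition) is precisely the expected derivation.
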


\begin{corollary}\label{cor:limit-lposet}
If $\EE_0\prefix\EE_1\prefix\EE_2\prefix\cdots$ is a increasing family of BES with limit $\cup_i\EE_i$ then $\L(\cup_i\EE_i) = \cup_{i\in\N}\L(\EE_i)$.
\end{corollary}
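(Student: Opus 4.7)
The plan is to show the two inclusions separately. The forward inclusion $\cup_i\L(\EE_i)\subseteq\L(\cup_i\EE_i)$ follows immediately from Corollary~\ref{cor:prefix-refinement}: by Proposition~\ref{pro:omega-completeness} each $\EE_i\prefix\cup_j\EE_j$, so $\L(\EE_i)\subseteq\L(\cup_j\EE_j)$, and taking the union over $i$ gives the claim.

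For the reverse inclusion, fix a lposet $u = (x,\preceq_x,\lambda_x)\in\L(\cup_i\EE_i)$, so that $x\in\C(\cup_i\EE_i)$. By Definition~\ref{def:configuration}, configurations arise from finite event traces and are therefore finite. Since $x\subseteq\cup_i E_i$ and the family $(E_i)_i$ is increasing, there exists $n\in\N$ with $x\subseteq E_n$. Applying Proposition~\ref{pro:prefix-trace} to the pair $\EE_n\prefix\cup_i\EE_i$, any event trace $\alpha$ of $\cup_i\EE_i$ linearising $x$ satisfies $\ov\alpha = x\subseteq E_n$, hence $\alpha\in\TT(\EE_n)$. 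This yields $x\in\C(\EE_n)$.

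It remains to check that the lposet produced from $x$ viewed as a configuration of $\EE_n$ coincides with $u$. The same instance of Proposition~\ref{pro:prefix-trace} implies that the event traces of $\EE_n$ whose underlying set is $x$ are exactly the event traces of $\cup_i\EE_i$ whose underlying set is $x$. Since the partial order in each BES is defined as the intersection of the orders $\preceq_\alpha$ induced by these traces, the two orders agree; the labellings agree because $\lambda_n$ is the restriction of $\cup_i\lambda_i$ to $E_n$. Consequently $u\in\L(\EE_n)\subseteq\cup_i\L(\EE_i)$, which completes the argument. There is no deep obstacle here; the only step requiring care is exploiting finiteness of configurations to locate a single index $n$ absorbing all events of $x$, after which Proposition~\ref{pro:prefix-trace} does all the work.
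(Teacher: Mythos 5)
Your proof is correct and follows essentially the same route as the paper: the forward inclusion via Corollary~\ref{cor:prefix-refinement}, and the reverse inclusion by using finiteness of configurations to locate an index $n$ with $x\subseteq E_n$ and then invoking Proposition~\ref{pro:prefix-trace} to transfer both the configuration and its induced order. The only difference is that you spell out the agreement of the lposet structures in more detail than the paper's parenthetical remark, which is a welcome clarification but not a different argument.
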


\begin{proof}
It is clear from Corollary~\ref{cor:prefix-refinement} that $\L(\EE_i)\subseteq\L(\cup_i\EE_i)$ and hence $\cup_i\L(\EE_i)\subseteq\L(\cup_i\EE_i)$.

Conversely, let $u\in\L(\cup_i\EE_i)$. By definition, $u = (x,\preceq_x,\lambda_x)$ where $x\in\C(\cup_i\EE_i)$. By construction, the set of events of $\cup_i\EE_i$ is $ \cup_i E_i$ where each $E_i$ is the set of events of the BES $\EE_i$. Since $x$ is a finite subset of $\cup_iE_i$ and $E_0\subseteq E_1\subseteq E_2\subseteq\cdots$, there exists $j\in\N$ such that $x\subseteq\ E_j$ and hence $u\in\L(\EE_j)$ follows from Proposition~\ref{pro:prefix-trace} (that is, the order $\preceq_x$ obtained from the BES $\cup_i\EE_i$ coincides with the order obtained from the BES $\EE_j$). \qed
\end{proof}

\subsection{Correspondence between our work, Varacca's and Katoen's}

In this subsection, given a PES $(E,\#,\leq,\lambda)$ we denote by $[e] = \{e' \ |\ e'\leq e\}$ and $[e) = [e]\setminus\{e\}$.

\begin{proposition}\label{pro:immediate-conflict}
Given an PES $\EE$ and its corresponding BES $\EE'$, then for every $e,e'\in E$, $e\#_\mu e'$ in $\EE$ iff $e\#_\mu e'$ in $\EE'$.
\end{proposition}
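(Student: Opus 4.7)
The plan is to reduce the statement about immediate conflict to a correspondence between configurations of the PES $\EE$ and configurations of its bundle translation $\EE'$, because $\#_\mu$ is defined purely in terms of configurations and conflict (Definition~\ref{def:immediate-conflict}), and the standard transformation preserves both events and the conflict relation $\#$.

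First, I would recall the standard transformation from PES to BES used in~\cite{Lan92}: keep $E$, $\#$, and $\lambda$ unchanged, and replace causality $\leq$ by bundles of the form $X \mapsto e$, where for each cause $e' \in [e)$ the set $X$ collects $e'$ together with all events in immediate conflict with $e'$ that are also candidates (so that $X \# X$ holds and $e'$ is the unique enabler inside any conflict-free history). Then observe immediately that $\#$ and $\#'$ coincide as binary relations on $E = E'$, so the first clause of $\#_\mu$ (namely $e\# e'$) transfers verbatim in both directions.

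Next, the key lemma I would prove is that $x \subseteq E$ is a configuration of the PES $\EE$ iff it is a configuration of the BES $\EE'$. For the forward direction, given a PES configuration $x$ (downward-closed and conflict-free), any linearisation of $x$ consistent with $\leq$ is an event trace of $\EE'$: at each event $e_i$ and each bundle $X \mapsto' e_i$, the unique cause $e' \in X \cap [e_i)$ must appear earlier in the linearisation by downward closure of $x$, and no conflict arises because $x$ is conflict-free. For the reverse direction, given an event trace $e_1 \cdots e_n$ of $\EE'$ with underlying set $x$, conflict-freeness is immediate from the trace condition, and downward closure under $\leq$ follows because for any $e' < e_i$ (with $e' \in [e_i)$) there is a bundle in $\EE'$ pointing to $e_i$ that forces a predecessor from that cause set to appear earlier; iterating back along $\leq$ keeps every ancestor of $e_i$ inside $x$.

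Finally, once $\C(\EE) = \C(\EE')$ as subsets of $\PP(E)$, the equivalence $e \#_\mu e' \Leftrightarrow e \#_\mu' e'$ is direct: the witness configuration $x$ with $x \cup \{e\},\, x \cup \{e'\} \in \C$ exists on one side iff it exists on the other, and the conflict $e \# e'$ is shared. The main obstacle is the second half of the configuration correspondence, where one must carefully check that the bundle translation chosen for $\EE'$ is precise enough to force downward closure under the original $\leq$; this depends on the exact shape of the bundle sets $X$, and I would need to verify that the cluster used in the construction indeed intersects $x$ only at the intended causal predecessor, so that no spurious enabler sneaks in and the reconstruction of $[e) \subseteq x$ from bundle-driven traces goes through.
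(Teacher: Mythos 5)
Your route differs from the paper's. The paper never establishes the full correspondence $\C(\EE)=\C(\EE')$; instead it argues directly with the canonical configurations: forward, from $e\#_\mu e'$ in the PES it reads off that $[e]\cup[e')$ and $[e')\cup[e]$ are configurations and takes $x=[e)\cup[e')$ as the witness required by Definition~\ref{def:immediate-conflict}; backward, from an \emph{arbitrary} BES witness $x$ it extracts $[e)\subseteq x$ and $[e')\subseteq x$ from the bundles of $\EE'$ and shows that $[e)\cup[e')\cup\{e\}$ and $[e)\cup[e')\cup\{e'\}$ are configurations. Your configuration-correspondence lemma is a heavier (though standard and reusable) tool, and it is true for the usual translation, so the overall strategy is viable.

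There is, however, one step you gloss over which is precisely where the paper's proof does its work. After establishing $\C(\EE)=\C(\EE')$ you declare the equivalence ``direct'' because ``the witness configuration $x$ exists on one side iff it exists on the other.'' That treats immediate conflict in the PES as the same existential condition as Definition~\ref{def:immediate-conflict}. But the PES notion being matched here is Varacca's: $e\#_\mu e'$ requires the \emph{specific} sets $[e]\cup[e')$ and $[e)\cup[e']$ to be configurations. Going from ``some witness $x$ exists'' back to ``the canonical witness $[e)\cup[e')$ works'' is not automatic: you must observe that any configuration containing $e$ (resp.\ $e'$) contains $[e)$ (resp.\ $[e')$), hence $[e)\cup[e')$ is down-closed and conflict-free, and that $e$ is not in conflict with anything in $[e')$ because $x\cup\{e\}$ is conflict-free. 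This is exactly the paper's converse argument, and without it your last paragraph proves a weaker (essentially tautological) statement. Your honest flag about the shape of the bundle sets is well placed but aimed at the wrong spot: the simple translation with bundles $\{e''\}\mapsto e$ for $e''\in[e)$ makes the down-closure claim $[e)\subseteq x$ immediate, whereas the missing canonical-witness step remains to be written whichever translation you pick.
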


\begin{proof}
Let $e\#_\mu e'\in E$, then $[e]\cup[e')$ and $[e']\cup[e)$ are configurations. Since $e$ and $e'$ are respectively maximal in these two configurations, we have $x = [e)\cup[e')\in\C(\EE')$ which satisfies Definition~\ref{def:immediate-conflict}.

Conversely, assume that $x,x\cup\{e\},x\cup\{e'\}\in\C(\EE')$ and $e\#e'$. By definition of $E'$, $[e)\subseteq x$ and $[e')\subseteq x$. Therefore, $[e)\cup [e')$ is down-closed and does not contain any conflicting elements. Since $x\cup\{e\}\in\C(\EE)$, $e$ is not in conflict with any element of $[e')$ and hence $[e)\cup[e')\cup\{e\}\in\C(\EE)$. Similarly, we prove that $[e)\cup[e')\cup\{e'\}\in\C(\EE)$. \qed
\end{proof}

\begin{proposition}
A cluster in the sense of Katoen~\cite{Kat96} satisfies Definition~\ref{def:cluster}.
\end{proposition}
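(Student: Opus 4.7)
The plan is to show that any set $K \subseteq E$ qualifying as a cluster in Katoen's sense necessarily satisfies both defining properties of a partial cluster in Definition~\ref{def:cluster} together with its maximality requirement, so that $K$ is a cluster in our sense.

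First I would explicitly unfold Katoen's definition from~\cite{Kat96}. Katoen's cluster is a maximal set of events that are pairwise in (immediate) conflict and whose members are enabled by exactly the same collection of bundle sets, together with additional structural constraints on the surrounding bundle structure (which is precisely what makes Katoen's notion stricter than ours, as illustrated by Fig.~\ref{fig:not-katoen-cluster}). The pairwise immediate conflict clause delivers the first requirement of Definition~\ref{def:cluster}, namely $e \neq e' \Rightarrow e \#_\mu e'$ for all $e, e' \in K$; and the equal-enabling clause delivers the second requirement $\forall x \subseteq E,\ x \mapsto e \Rightarrow x \mapsto e'$ for all $e, e' \in K$. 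Hence every Katoen cluster is at least a partial cluster in our sense.

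Second, for maximality, I would invoke the characterisation given by the proposition preceding the statement and take any $e \in E$ with $e \#_\mu e'$ for every $e' \in K$ and $x \mapsto e \Leftrightarrow x \mapsto e'$ for every $x \subseteq E$ and every $e' \in K$. I would then verify that $K \cup \{e\}$ still satisfies all of Katoen's defining clauses, including the additional structural constraints. These extra conditions are formulated entirely in terms of the common bundle-enabling pattern and conflict relations shared by members of $K$, and they are therefore also satisfied by $e$ by hypothesis, so $K \cup \{e\}$ qualifies as a set of the form Katoen considers. By maximality of $K$ in Katoen's sense, $e \in K$, which establishes maximality of $K$ in our sense.

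The main obstacle will be carefully lifting Katoen's additional structural constraints from~\cite{Kat96} and checking that each is preserved when extending $K$ by an event $e$ satisfying only our weaker conditions. This is essentially a bookkeeping exercise, since those constraints are invariants of the entire cluster depending only on the shared bundles and conflicts, but it does require precisely stating Katoen's definition so that every clause is addressed explicitly. Once this verification is in place, the conclusion follows immediately from the two steps above.
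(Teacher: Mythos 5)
There is a genuine gap, and it sits exactly where the real work of this proposition lies. You unfold Katoen's definition as requiring that the members of a cluster be ``pairwise in (immediate) conflict'' and then say this clause ``delivers'' the first requirement of Definition~\ref{def:cluster}. But Katoen's clusters are only required to be pairwise in \emph{plain} conflict $\#$ (together with being equally pointed by bundles and being closed under adding any event that conflicts with all members); immediate conflict $\#_\mu$ is \emph{not} part of his definition, and establishing it is the substantive content of the proposition. The paper's proof does precisely this: given $e,e'\in K$ and a configuration $x$ with $x\cup\{e\}\in\C(\EE)$, it uses the equal-pointing clause to get $x\cap z\neq\emptyset$ for every bundle $z\mapsto e'$, and uses Katoen's closure clause to rule out $x\cap\cfl(e')\neq\emptyset$ (any such element would lie in $K$ and hence conflict with $e$, contradicting $x\cup\{e\}$ being a configuration), concluding $x\cup\{e'\}\in\C(\EE)$ and thus $e\#_\mu e'$. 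By writing ``(immediate)'' into Katoen's definition you have assumed exactly the step that needs proving.

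Your maximality argument is also shakier than it needs to be. You propose to show that $K\cup\{e\}$ satisfies all of Katoen's clauses, deferring the ``additional structural constraints'' to bookkeeping you have not carried out. The paper avoids this entirely: if $H$ is a partial cluster with $K\subseteq H$ and $e\in H$, then $e\#_\mu e'$ and hence $e\# e'$ for every $e'\in K$, and Katoen's clusters contain every event that conflicts with all of their members, so $e\in K$ directly. So the maximality half can be repaired along your lines, but the first half cannot be salvaged without actually proving that plain conflict within a Katoen cluster upgrades to immediate conflict.
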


\begin{proof}
Let $K$ be a cluster in the sense of Katoen's~\cite{Kat96} and $e,e'\in K$. Let $x$ be a configuration such that $x\cup\{e\}$ is a configuration. Then for every bundle $z\mapsto e'$, we have $x\cap z\neq \emptyset$ because $z\mapsto e'$ iff $z\mapsto e$. Moreover, $x\cap\cfl(e')=\emptyset$ else that elements should be in the cluster $K$ and hence conflicting with event $e$ too (this is impossible because $x\cup\{e\}$ is a configuration). Therefore, $x\cup\{e'\}$ is a configuration and $e\#_\mu e'$. 

The second property of partial clusters (events of a clusters are equally pointed) is found in Katoen's definition and it suffices to prove that $K$ is maximal. Let $H$ be a partial cluster such that $K\subseteq H$ and $e\in H$. By definition of a partial cluster, $e\#_\mu e'$ for every $e'\in K$. Since $e\#_\mu e'$ implies $e\#e'$, we deduce that $e\in K$ because Katoen's cluster contains every event that is conflicting with all events in it.\qed
\end{proof}

\begin{proposition}\label{pro:cluster-cell}
Definition~\ref{def:cluster} coincides with Varacca's partial cells on PES~\cite{Var03}. In particular, a cluster corresponds to a cell on PES.
\end{proposition}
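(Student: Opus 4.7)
The plan is to fix the standard translation from PES to BES alluded to at the start of Section~\ref{pBES} (Langerak's translation~\cite{Lan92}), under which the set of events and the conflict relation are preserved, and the bundles pointing to an event $e$ are determined entirely by its past $[e)$: concretely, the bundles pointing to $e$ are the maximal sets of pairwise $\#$-conflicting events occurring in $[e)$. The key structural invariant of this translation that I shall rely on is
\[
\{x \subseteq E \mid x \mapsto e\} = \{x \subseteq E \mid x \mapsto e'\} \ \Longleftrightarrow \ [e) = [e'),
\]
which follows directly from the construction above (the right-to-left direction is immediate because both sides of the bundles are generated from the same past; the left-to-right direction follows because each $e'' \in [e)$ appears in at least one bundle pointing to $e$, so one recovers $[e)$ from the union of bundle sets).

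Recall Varacca's definitions: a partial cell on a PES is a set $K \subseteq E$ of pairwise immediately conflicting events such that $[e) = [e')$ for all $e, e' \in K$, and a cell is a maximal partial cell. First, I would establish that a partial cluster in the sense of Definition~\ref{def:cluster} on the translated BES is exactly a partial cell on the original PES. The immediate-conflict clauses match by Proposition~\ref{pro:immediate-conflict}. For the equally-pointed clause, observe that by symmetry (applying the implication in Definition~\ref{def:cluster} to both $(e,e')$ and $(e',e)$), the condition $\forall x: x \mapsto e \Rightarrow x \mapsto e'$ for all $e, e' \in K$ is equivalent to $\forall x: x \mapsto e \Leftrightarrow x \mapsto e'$. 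By the structural invariant above, this is exactly $[e) = [e')$, which is Varacca's past-equality condition.

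Having identified partial clusters with partial cells, the maximality clauses coincide by definition, so clusters coincide with cells. The main obstacle I anticipate is pinning down the translation precisely enough so that the structural invariant above is unambiguous; once this is settled, the rest of the argument is a straightforward unfolding of the two definitions and a symmetry observation. \qed
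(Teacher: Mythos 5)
Your proposal is correct and takes essentially the same route as the paper: both directions reduce to the correspondence of immediate conflicts (Proposition~\ref{pro:immediate-conflict}) together with the fact that, under the PES-to-BES translation, two events are pointed by the same bundles iff they have the same causal past. Your version is in fact more explicit than the paper's (which simply asserts that $[e)=[e')$ ``follows directly from the definition of the transformation'' and does not spell out the transfer of maximality), so the extra care about the structural invariant and the symmetry of the equally-pointed clause is a welcome refinement rather than a deviation.
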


\begin{proof}
Let $\EE = (E,\leq,\#,\lambda)$ be a PES and $\EE' = (E,\mapsto,\#,\lambda)$ its corresponding BES. 

Let $K$ be a cluster of $\EE'$ as per Definition~\ref{def:cluster} and $e,e'\in K$. We show that $K$ is a cell. It follows directly from the definition of transformation that that $[e)=[e')$ and $e\#_\mu$. 

Conversely, let $K$ be a cell of $\EE$ and $e,e'\in K$. It contains mutually immediate conflicting events by definition of a cell. Since, $[e) = [e)'$, we deduce that $z\mapsto e$ implies $z\mapsto e'$ for every $z\subseteq E$. \qed
\end{proof}

Remind that a PES is confusion free if and only if $\#_\mu$ is transitive and $e\#_{\mu}e'$ implies $[e)=[e')$.

\begin{proposition}
A PES is confusion free iff its corresponding BES is confusion free as per Definition~\ref{def:confusion-free}.
\end{proposition}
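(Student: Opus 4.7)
The plan is to prove both directions, leveraging Propositions~\ref{pro:immediate-conflict} and~\ref{pro:cluster-cell} which transport immediate conflict and the cluster/cell correspondence between the PES $\EE$ and its corresponding BES. Recall that a PES is confusion free iff $\#_\mu$ is transitive and $e\#_\mu e'$ implies $[e) = [e')$, and that conflict inheritance in a PES states $e\# e'$ iff there exist $e_1\leq e$ and $e_2\leq e'$ with $e_1\#_\mu e_2$.

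For the forward direction, assume $\EE$ is confusion free and verify the two clauses of Definition~\ref{def:confusion-free}. For the first, if $e\#_\mu e'$ in the BES then Proposition~\ref{pro:immediate-conflict} yields $e\#_\mu e'$ in the PES, hence $[e) = [e')$. For any cluster $K$ containing $e'$ and any $e''\in K$, Proposition~\ref{pro:cluster-cell} gives $[e'') = [e')$ and $e''\#_\mu e'$; therefore $[e) = [e'')$, which under the standard transformation means $z\mapsto e \Leftrightarrow z\mapsto e''$, while transitivity of $\#_\mu$ yields $e\#_\mu e''$ when $e\neq e''$. So $K\cup\{e\}$ is a partial cluster, whence $e\in K$ by maximality of $K$; since $K$ was arbitrary, $e\in\<e'\>$.

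For the second clause, fix $x\in\C(\EE)$ with $\<e\>\cap x = \emptyset$ and $x\cup\{e\}\in\C(\EE)$, and take $e''\in\<e\>$ (the case $e'' = e$ is given). Proposition~\ref{pro:cluster-cell} gives $[e'') = [e)\subseteq x$, so $x\cup\{e''\}$ is down-closed; the main obstacle is ruling out conflicts between $e''$ and elements of $x$. Suppose for contradiction that $e''\# e_j$ for some $e_j\in x$; conflict inheritance provides $e_1\leq e''$ and $e_2\leq e_j$ with $e_1\#_\mu e_2$. If $e_1\in[e'') = [e)\subseteq x$, then $e_1\# e_j$ places two conflicting events in the configuration $x$, a contradiction. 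Otherwise $e_1 = e''$, and $e\#_\mu e''$ (obtained from $e''\in\<e\>$ via Definition~\ref{def:cluster} and transported to the PES by Proposition~\ref{pro:immediate-conflict}) combined with $e''\#_\mu e_2$ gives $e\#_\mu e_2$ by transitivity in the PES, noting $e\neq e_2$ since otherwise $e\leq e_j$ would force $e\in x$, contradicting $\<e\>\cap x = \emptyset$; hence $e\# e_j$, contradicting $x\cup\{e\}\in\C(\EE)$.

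For the converse, suppose the BES is confusion free per Definition~\ref{def:confusion-free}. If $e\#_\mu e'$ in the PES, it holds in the BES by Proposition~\ref{pro:immediate-conflict}; the first clause of Definition~\ref{def:confusion-free} then gives $e\in\<e'\>$, and Proposition~\ref{pro:cluster-cell} yields $[e) = [e')$. For transitivity of $\#_\mu$, chains $e\#_\mu e'$ and $e'\#_\mu e''$ in the PES transfer to the BES and place both $e$ and $e''$ in $\<e'\>$; Definition~\ref{def:cluster} delivers $e\#_\mu e''$ in the BES (when $e\neq e''$), which transports back to the PES via Proposition~\ref{pro:immediate-conflict}. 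The converse is therefore routine; the bulk of the work lies in the case analysis of the second clause of the forward direction.
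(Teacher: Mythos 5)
Your proof is correct, and its first clause (forward direction) and converse match the paper's argument essentially step for step (the paper is terser in the converse, but the underlying point --- that $e$ and $e''$ lie in a common partial cluster, hence $e\#_\mu e''$ --- is the same). The one genuine structural difference is in the second half of the forward direction: the paper does not verify the second clause of Definition~\ref{def:confusion-free} directly, but instead verifies the static sufficient condition of Proposition~\ref{pro:confusion-free-simpler} (namely $\<e\>\cap\cfl(e')\neq\emptyset\Rightarrow\<e\>\subseteq\cfl(e')$) and then invokes that proposition, whereas you argue directly on configurations, showing $x\cup\{e''\}$ is down-closed and conflict-free. The case analysis at the core is identical in both --- decompose $e''\# e_j$ by conflict inheritance into some $e_1\leq e''$, $e_2\leq e_j$ with $e_1\#_\mu e_2$, then split on $e_1<e''$ versus $e_1=e''$ and use transitivity of $\#_\mu$ --- so you have effectively re-proved the relevant instance of Proposition~\ref{pro:confusion-free-simpler} inline, specialised to the PES setting where configurations are the finite down-closed conflict-free sets. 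The paper's route buys brevity by reusing an already-established lemma; yours is self-contained and makes explicit the two small side conditions ($e\neq e_2$, and $[e'')\subseteq x$) that the delegated version hides, at the cost of repeating the configuration bookkeeping. Both arguments rely on the same background facts about the standard PES-to-BES transformation (Propositions~\ref{pro:immediate-conflict} and~\ref{pro:cluster-cell}, and heredity of conflict), so I see no gap.
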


\begin{proof}
Let $\EE$ be a PES and $\EE'$ its corresponding BES. 

Assume that $\EE$ is a confusion free PES and $e,e'\in E$ such that $e\#_\mu e'$. Let $K$ be a cluster such that $e'\in K$. By definition of a confusion free PES, $e\#_\mu e'$ implies $[e) = [e')$ i.e. $e$ and $e'$ are pointed by the same bundles. By transitivity of $\#_\mu$, $e\#_\mu e''$ for every $e''\in K$ and therefore $e\in K$ by maximality of $K$. Since that is true for every cluster containing $e'$, we have $e\in\<e'\>$.

We now prove the property of Proposition~\ref{pro:confusion-free-simpler}. Let $e,e'\in E$ such that $e''\in\<e\>\cap\cfl(e')\neq\emptyset$. Since $e''\#e'$, there exists $e_0',e_0''\in E$ such that $e_0''\leq e''$ and $e_0'\leq e'$ and $e_0''\#_mu e_0'$. If $e_0''<e''$ then $e_0''<e$ because the BES is confusion free (and hence $e''$ and $e$ are pointed by the same bundle). Hence $e\#e'$ by heredity of $\#$. Else if $e_0'' = e''$, then $e\#_\mu e_0'$ because $\#_\mu$ is transitive. It follows that $e\#e'$ by heredity of $\#$. 

Conversely, assume that $\EE'$ is a confusion free BES and $e,e',e''\in E$ such that $e\#_\mu e'$ and $e'\#_\mu e''$. The first property of confusion freeness implies that $e\in\<e'\>$ and $e'\in\<e''\>$, that is, $[e)=[e')$ and $[e') = [e'')$. Therefore, $e\#_\mu e''$ holds in the PES $\EE$ and whenever $e_1\#_\mu e_2$ holds in the BES $\EE'$, we have $[e_1) = [e_2)$ i.e. $\EE$ is confusion free.\qed
\end{proof}

\subsection{Complementary Proofs of the Algebraic Laws}

\begin{proposition}\label{pro:exit-maximal}
Let $\EE$ be a regular BES. If a configuration $x\in\C(\EE)$ contains a final event i.e. $x\cap\exit_\EE \neq\emptyset$ then it is maximal. 
\end{proposition}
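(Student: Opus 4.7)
The plan is to prove the statement by structural induction on the regular BES $\EE$, following the grammar $\EE::=0\mid 1\mid a\mid \EE+\FF\mid \EE\cdot\FF\mid \EE\|\FF\mid \EE*\FF$. The base cases are immediate: for $0$ we have $\exit_0=\emptyset$, so the hypothesis is vacuous, while for $1$ and $a$ the only configuration containing the sole final event is the whole (singleton) event set, which is trivially maximal.

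For the choice case $\EE+\FF$, the conflict $\mathrm{sym}(\init(\EE)\times\init(\FF))$ forces any configuration to lie entirely in $\EE$ or entirely in $\FF$, and the additional conflict $\mathrm{sym}(\exit_\EE\times\exit_\FF)$ prevents extending $x$ across the choice; the induction hypothesis applied to whichever side $x$ belongs to then gives maximality in $\EE+\FF$. For the sequential case $\EE\cdot\FF$, the bundles $\exit_\EE\mapsto e$ for $e\in\init(\FF)$ together with $\exit_\EE\#\exit_\EE$ force that whenever $x\cap\exit_\FF\neq\emptyset$ we can decompose $x=(x\cap E)\cup(x\cap F)$ with $x\cap E\in\C(\EE)$ containing exactly one final event of $\EE$ and $x\cap F\in\C(\FF)$ containing a final event of $\FF$; two applications of the induction hypothesis give maximality of each restriction, and hence of $x$. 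The concurrency case $\EE\|\FF$ is analogous but cleaner: since $\exit_{\EE\|\FF}=\{f\}$ with $\exit_\EE\mapsto f$ and $\exit_\FF\mapsto f$, having $f\in x$ forces final events of both $\EE$ and $\FF$ into $x$, and inductive maximality on each side yields maximality of $x$.

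The main obstacle is the Kleene star case, which I would handle via $\omega$-completeness. Since $x$ is finite and $\EE*\FF$ is the limit of the chain $(\EE*_{\le i}\FF)_i$, Proposition~\ref{pro:omega-completeness} together with Corollary~\ref{cor:prefix-refinement} furnish some index $k$ with $x\in\C(\EE*_{\le k}\FF)$ and with the final event of $x$ already lying in $\exit_{\EE*_{\le k}\FF}$. Since $\EE*_{\le k}\FF$ is regular and built using only the operators $+$ and $\cdot$ already handled above, the induction hypothesis yields that $x$ is maximal in $\C(\EE*_{\le k}\FF)$. To lift this to maximality in $\C(\EE*\FF)$, I would observe that any event $e$ lying beyond $\EE*_{\le k}\FF$ appears at a strictly deeper unfolding level, reachable only through the ``continue'' branch of a nondeterministic choice $\FF+\EE\cdot(\cdots)$ at the very level where $x$ already committed to its terminating $\FF$-summand; the conflicts added by that $+$ between the initial events and between the final events of the two summands forbid $x\cup\{e\}$ from being a configuration. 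Combined with maximality at level $k$, this closes the argument.
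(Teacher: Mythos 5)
Your proof is correct and follows essentially the same route as the paper's: structural induction over the regular operators, splitting $x$ along the $+$, $\cdot$ and $\|$ constructions and reducing the Kleene star to a finite approximant $\EE*_{\le k}\FF$. The only difference is that you spell out, via the conflicts introduced at each unfolding's nondeterministic choice, why maximality at level $k$ lifts to the limit $\C(\EE*\FF)$ --- a step the paper leaves implicit with ``we are back to the case of $(+)$ and $(\cdot)$'' --- and your elaboration is sound.
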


\begin{proof}
We reason by structural induction. The claim holds for the basic BES. Let $\EE,\FF$ be two BES satisfying the induction hypothesis. 

The case of $+$ is clear because $\C(\EE+\FF) =\C(\EE)\cup\C(\FF)$. 

For $(\cdot)$, let $x\in\C(\EE\cdot\FF)$ and assume that $x$ contains an exit event. Since all events in $\exit_\EE$ occurs before any event of $\FF$, we have $x\cap E\in\C(\EE)$ (Proposition~\ref{pro:prefix-trace}). Moreover, we show similarly that $x\cap F\in\C(\FF)$  and is maximal by induction hypothesis. Since $x\cap F$ must at least contain one event from $\init(\FF)$, $x\cap E$ necessarily contain an event from $\exit_\EE$ and hence $x\cap E$ is also maximal in $\EE$. Therefore, $x$ must be maximal in $\EE\cdot\FF$. 

For $\|$, let $x\in\C(\EE\|\FF)$ and $e,f$  be the fresh events introduced by the construction of $\EE\|\FF$. Here again, we have $x\cap E\in\C(\EE)$ and $x\cap F\in\C(\FF)$ and since $\exit_{\EE\|\FF} = \{f\}$ and $\exit_\EE\mapsto f$ and $\exit_\FF\mapsto f$, by induction hypothesis, $x\cap E$ and $x\cap F$ are maximal in $\EE$ and $\FF$ respectively and we deduce the maximality of $x$.

For $*$, if $x\in\C(\EE*\FF)$ then $x\in\C(\EE*_{\le i}\FF)$ for some $i$ and we are back to the case of $(+)$ and $(\cdot)$. \qed
\end{proof}

\begin{corollary}\label{pro:seq-maximal}
Let $\EE,\FF$ be two regular BES, if $x\in\C(\EE\cdot\FF)$ and $x\cap F\neq\emptyset$ then $x\cap E$ is maximal in $\EE$.
\end{corollary}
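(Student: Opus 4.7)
The plan is to distill from the sequential-composition case of Proposition~\ref{pro:exit-maximal} the two facts that together yield the corollary: first, that $x \cap E$ is itself a configuration of $\EE$; second, that this configuration contains a final event of $\EE$; and then to invoke Proposition~\ref{pro:exit-maximal} to conclude maximality.

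First I would observe that $\EE \prefix \EE \cdot \FF$. The sequential composition only adjoins events in $F$, conflicts and bundles internal to $\FF$, and the new bundles $\exit_\EE \mapsto e$ for $e \in \init(\FF)$; none of these modify the structure when restricted to $E$, since the new bundles point into $F$, not $E$. Combined with Proposition~\ref{pro:prefix-trace} (concretely, a restriction-to-$E$ argument in the spirit of Lemma~\ref{pro:trace-restriction}), this shows that for any linearisation $\alpha$ of $x$ the restriction $\alpha|_{E}$ is still a valid event trace of $\EE$, and hence $x \cap E \in \C(\EE)$.

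Next I would prove that $(x \cap E) \cap \exit_\EE \neq \emptyset$. Using the hypothesis $x \cap F \neq \emptyset$, pick a linearisation $\alpha$ of $x$ and let $f'$ be the \emph{first} event of $\alpha$ that lies in $F$. Any bundle internal to $\FF$ pointing to $f'$ would force some $F$-event to precede $f'$ in $\alpha$, contradicting the choice of $f'$; so $f' \in \init(\FF)$. By the construction of $\EE \cdot \FF$, the new bundle $\exit_\EE \mapsto f'$ exists, and since $\alpha$ is an event trace, some event in $\exit_\EE$ must occur earlier in $\alpha$ and therefore lie in $x \cap E$.

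Finally, I apply Proposition~\ref{pro:exit-maximal} to the regular BES $\EE$ and the configuration $x \cap E$, whose intersection with $\exit_\EE$ is nonempty, to obtain that $x \cap E$ is maximal in $\C(\EE)$. The only slightly delicate point is the argument that the first $F$-event in $\alpha$ must be initial in $\FF$, since one has to separate the bundles inherited from $\FF$ (with bundle sets inside $F$) from the freshly introduced bundles from $\exit_\EE$; everything else is a direct application of the preceding proposition and the facts already established in Section~\ref{operations}.
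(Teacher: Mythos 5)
Your proof is correct and follows essentially the same route as the paper, which establishes this claim inside the sequential-composition case of the proof of Proposition~\ref{pro:exit-maximal}: restrict a linearisation of $x$ to $E$ to get $x\cap E\in\C(\EE)$, observe that the first $F$-event in the linearisation must lie in $\init(\FF)$ and hence be preceded by an event of $\exit_\EE$, and then invoke Proposition~\ref{pro:exit-maximal} for the regular BES $\EE$. One small inaccuracy: $\EE\prefix\EE\cdot\FF$ is literally false under the paper's definition of $\prefix$ (since $\exit_{\EE\cdot\FF}=\exit_\FF$ and so $\exit_{\EE\cdot\FF}\cap E\neq\exit_\EE$ in general), but the trace-restriction argument you actually need never touches the final-event clause, so your step survives --- the paper's own appeal to Proposition~\ref{pro:prefix-trace} at this point shares the same looseness.
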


\begin{repproposition}{\ref{pro:confusion-free-regular}}
Every regular BES is confusion free.
\end{repproposition}

\begin{proof}
We reason by structural induction on the structure of $\EE$. It is clear that the basic BES are confusion free and the first property of confusion freeness follows directly from the fact that two events of $\EE_1\bullet \EE_2$, for  $\bullet\in\{+,\cdot,\|\}$, are in immediate conflict if an only if they are in immediate conflict in $\EE_1$ or $\EE_2$; or both belongs to $\init(\EE_1\bullet\EE_2)$ and we have $\init(\EE)\#\init(\EE)$ for every regular BES $\EE$(resp. they are in immediate conflict in $\EE*_{\le i}\FF$ for some $i\in\N$). Let us concentrate on the second property. Let $x\in\C(\EE)$ and $e\in E\setminus x$ such that $x\cup\{e\}\in\C(\EE)$. Let $e'\in\<e\>$.
\begin{itemize}
\item case $\EE = \EE_1+\EE_2$: if $e,e'\in E$ (or $\FF$) the we are done by induction hypothesis. Otherwise, $e,e'\in \in(\EE_1+\EE_2)$ and we are done because $x=\emptyset$.
\item case $\EE = \EE_1\cdot\EE_2$: then either $x\in\C(\EE_1)$ and $e\in E_1$ or $x = x_{\EE_1}\cup x_{\EE_2}$ for some $x_{\EE_1}\in\C(\EE_1)$ and $ x_{\EE_2}\in\C(\EE_2)$ and $e\in E_2$. The result follow by induction hypothesis.
\item case $\EE = \EE_1\|\EE_2$: we have $x = \{f\}\cup x_{\EE_1}\cup x_{\EE_2}$ for some $x_{\EE_1}\in\C(\EE_1)$, $ x_{\EE_2}\in\C(\EE_2)$ and $\{f\} = \init(\EE_1\|\EE_2)$. If $e\in E_1$ then $x_{\EE_1}\cup\{e\}\in\C(\EE_1)$ and the result follows by induction hypothesis. Similarly for $e\in E_2$. If $\{e\} = \exit_{\EE}$ then the result is trivial because $\<e\> = \{e\}$ i.e. $e=e'$.
\item case $\EE = \EE_1*\EE_2$: by construction, there exists $i\in\N$ such that $e$ is an event of $\EE_1*_{\le i}\EE_2$. Since $x\cup\{e\}$ is a configuration, $x$ is necessary a configuration of $\EE*_{\le i}\EE_2$(Corollary~\ref{pro:seq-maximal}). The result follows from the previous cases of $(+)$ and $(\cdot)$ and the induction hypothesis.\qed
\end{itemize}
\end{proof}

\begin{repproposition}{\ref{pro:precongruence}.}
$\refby$ is a precongruence i.e. for every pBES $(\EE,\pi), (\FF,\rho)$ and $(\G,\eta)$, if $(\EE,\pi)\refby (\FF,\rho)$ then $(\EE,\pi)\bullet(\G,\eta)\refby (\FF,\rho)\bullet(\G,\eta)$ (and symmetrically) for every $\bullet\in\{+,\cdot,\|,*)$.
\end{repproposition}

\begin{proof}
The case of $+$ is clear. For $\|$, Let $(\EE,\pi)\refby(\FF,\rho)$ be witnessed by a simulation $S\subseteq\C(\EE)\times\D(\C(\FF))$ and $(\G,\eta)$ be any pBES. We construct a relation $R\subseteq\C(\EE\|\G)\times\D(\C(\FF\|\G))$ such that $(x,\Theta)\in R$ iff $x = x_\EE\cup x_\G\cup z$ and $\Theta = \sum_i\alpha_i\delta_{y_i\cup x_\G\cup z}$ where $(x_\EE,\sum_i\alpha_i\delta_{y_i})\in S$, $x_\G\in\C(\G)$ and $z \in\{ \{e\},\{e,f\}\}$ and $e,f$ are the delimiters introduced by $\|$. Let us show that $R$ is indeed a simulation.

\begin{itemize}
\item That $(\emptyset,\delta_\emptyset)\in R$ is clear.
\item Let $(x,\Theta)\in R$ such that $x = x_\EE\cup x_\G\cup z$ and $\Theta = \sum_i\alpha_i\delta_{y_i\cup x_\G\cup z}$. Since $(x_\EE,\sum_i\alpha_i\delta_{y_i})\in S$, we have $x_\EE\refby_s y_i$ for every $i$. Therefore, $x\refby y_i\cup x_\G\cup z$ for every $i$.
\item Let $(x,\Theta)\in R$ such that $x = x_\EE\cup x_\G\cup z$ and $\Theta = \sum_i\alpha_i\delta_{y_i\cup x_\G\cup z}$. Let us write $\Theta_\FF = \sum_i\alpha_i\delta_{y_i}$. Assume that $x\prefix\Delta'$ in $(\EE\|\G,\pi\cup \eta)$, then either the prefix relation is obtained from $\pi$ or $\eta$.
\begin{itemize}
\item If the step is made in $\EE$ then 
$\Delta' = \sum_{e\in\supp(p)}p.e\delta_{x\cup\{e\}}$ for some $p\in\pi$. In particular, $x_\EE\prefix\sum_{e\in\supp(p)}p.e\delta_{x_\EE\cup\{e\}}$ in $\EE$ and since $(x_\EE,\Theta_\FF)\in S$, there exists $\Theta_\FF'$ such that $\Theta_\FF\ov\prefix^*\Theta_\FF'$ and $(\sum_{e\in\supp(p)}p.e\delta_{x_\EE\cup\{e\}},\Theta_\FF')\in\ov S$. By definition of lifting, there exists $\Theta_{\FF,e}\in\D(\C(\FF))$, for each $e\in\supp(p)$, such that $(x_\EE\cup\{e\},\Theta_{\FF,e})\in S$ and $\Theta_\FF' = \sum_{e\in\supp(p)}p.e\Theta_{\FF,e}$. If $\Theta_{\FF,e} = \sum_j\beta_j^e\delta_{y_j}$ then we consider $\Theta_{e} = \sum_j\beta_j^e\delta_{y_j\cup x_\G\cup z}$. By definition of $R$, we have $(x\cup\{e\},\Theta_{e})\in R$ for every $e'\in\supp(p)$ and Proposition~\ref{pro:lifting} implies that $(\Delta',\sum_{e'\in\supp(p)}p.e'\Theta_{e'})\in \ov R$ and it is clear that $\Theta\ov\prefix^*\sum_{e'\in\supp(p)}p.e'\Theta_{e'}$.

\item If the step is made in $\G$ then $\Delta' = \sum_{e\in\supp(p)}p.e\delta_{x_\EE\cup x_\G\cup\{e\}}$ for some $p\in\eta$. As before, 
$$x_\G\prefix\sum_{e\in\supp(p)}p.e\delta_{x_\G\cup\{e\}}$$ and therefore $\Theta\prefix\sum_i\sum_{e\in\supp(p)}\alpha_ip.e\delta_{y_i\cup x_\G\cup\{e\}\cup z} = \Theta'$ and $(\Delta',\Theta')\in\ov R$ can be deduced using Proposition~\ref{pro:lifting}.
\end{itemize}
\item It is obvious that $R$ preserves configuration final events.
\end{itemize}
The same simulation $R$ can be used to prove monotonicity of $\cdot$ because $\C(\EE\cdot\G)\subseteq\C(\EE\|\G)$. The only difference when $x\prefix $.

For $*$, we use Proposition~\ref{pro:kleene-star} (which is proved by direct simulation construction) and monotonicity of the other operators. Let $\EE\refby\FF$, we remove the local probability for simplicity. Since $\FF\cdot(\FF*\G)\refby\FF*\G$, monotonicity of sequential composition implies $\EE\cdot(\FF*\G)\refby\FF\cdot(\FF*\G)\refby\FF*\G$. But $\G\refby\FF*\G$, therefore we have $\EE\cdot\FF*\G + \G\refby\FF*\G$ and by Axiom~(\ref{eq:induction}), we have $\EE*\G\refby\FF*\G$.  The symmetric inequality follows from the left monotonicity of ($\cdot$). \qed
\end{proof}

\begin{repproposition}{\ref{pro:subdistributivity}}
The subdistributivity laws~(\ref{eq:+-subdist-seq},\ref{eq:pc-supdist-seq}) and (\ref{eq:+-subdist-par},\ref{eq:pc-supdist-par}) and the interchange law~(\ref{eq:interchange-law}) hold for regular pBES modulo probabilistic simulation.
\end{repproposition}

\begin{proof}
We give the complete proof for Equation~(\ref{eq:pc-supdist-par}) (Equation~(\ref{eq:pc-supdist-seq}) in a similar fashion) and~(\ref{eq:interchange-law}). Two copies of $\EE$ are made for the distributed expression and the fresh events introduced by each $\|$ are respectively denoted by $e,e_1,e_2,f,f_1,f_2$. We construct a relation 
$$S\subseteq\C(\EE\|(\FF+\G))\times\D(\C(\EE_1\|\FF+\EE_2\|\G))$$ such that $(x,\Theta)\in S$ if one of the following cases hold:
\begin{itemize}
\item $x\in\C(e\cdot \EE)$ and $\Theta = (1-\alpha)\delta_{x_1} + \alpha\delta_{x_2}$,
\item $x = z\cup x_\EE\cup x_\FF$ such that $x_\EE\in\C(\EE),x_\FF\in\C(\FF)\setminus\{\emptyset\}$ and $\Theta = \delta_{x_{\EE_1}\cup x_\FF\cup z_1}$,
\item $x = z\cup x_\EE\cup x_\G$ such that $x_\EE\in\C(\EE),x_\G\in\C(\G)\setminus\{\emptyset\}$ and $\Theta = \delta_{x_{\EE_2}\cup x_\G\cup z_2}$,
\end{itemize}
where $z\in\{\{e\},\{e,f\}\}$ (resp. for $z_i$). We show that $S$ is indeed a probabilistic simulation.
\begin{itemize}
\item It is clear that $(\emptyset,\delta_\emptyset)\in S$ because $\emptyset\in\C(\EE)$ and $\delta_\emptyset = (1-\alpha)\delta_\emptyset + \alpha\delta_\emptyset$.
\item Let $(x,\Theta)\in S$, in all three cases, we have $x\refby_s y$ for all $y\in\supp(\Theta)$.
\item Let $(x,\Theta)\in S$ and $x\prefix\Delta'$. By definition of $+$, $\|$ and $\prefix$, there are four cases.
\begin{itemize}
\item $\supp(\Delta')\subseteq\C(\EE)$: since $x\subseteq y$ for all $y\in\supp(\Delta')$, we have $x\subseteq E$. Therefore, $\Theta = (1-\alpha)\delta_{x_1}+\alpha\delta_{x_2}$ and therefore  $x_i\prefix\Delta'_i$ (the copies of $\Delta'$). Therefore, Proposition~\ref{pro:lifting} implies that $\Theta\prefix (1-\alpha)\Delta_1'+\alpha\Delta_2'$ and $(\Delta',(1-\alpha)\Delta_1'+\alpha\Delta_2')\in\ov S$.
\item $\supp(\Delta')\subseteq\C(\FF)$: this implies that $x_\FF \neq \emptyset$ then the result is clear. 
\item $\supp(\Delta')\subseteq\C(\G)$: as above.
\item $\supp(\Delta') = (1-\alpha)\sum_{e\in\init(\FF)}p.e\delta_{x\cup e} + \alpha\sum_{e\in\init(\G)}q.e\delta_{x\cup e}$ by definition of $\FF\pc{\alpha}\G$. By definition of $R$, $\Theta = (1-\alpha)\delta_{x_1} + \alpha\delta_{x_2}$ and since $x_1\cup\{e\}$ is a configuration for every $e\in\init(\FF)$ (reps. for $\G$), we have $\Theta\ov\prefix\Theta' = (1-\alpha)\sum_{e}p.e\delta_{x_1\cup \{e\}} + \alpha\sum_{e}q.e\delta_{x_2\cup\{e\}}$ and $\Delta'\ov R\Theta'$ by Definition of lifting.
\end{itemize}
\item Since the final events are respectively $\{f\}$ and $\{f_1,f_2\}$ for the left and right hand side, it is clear that $R$ satisfies the last property of a simulation.
\end{itemize}
For Equation~(\ref{eq:interchange-law}), let us write $e,f$ and $e',f'$ the respective events introduced as delimiters in $\EE\|\FF$ and $\EE'\|\FF'$. The delimiters of $(\EE\cdot\EE')\|(\FF\cdot\FF')$ are $e,f'$. We consider the relation $S\subseteq \C(\EE\|\FF)\cdot(\EE'\|\FF')\times (\EE\cdot\EE')\|(\FF\cdot\FF')$ such that $(x,y)\in S$ iff $y = x\setminus\{f,e'\}$. It then follows easily that the probabilistic relation $(x,\delta_y)\in R$ iff $(x,y)\in S$ is indeed a simulation.
\end{proof}

\begin{repproposition}{\ref{pro:kleene-star}}
The Kleene star satisfies equation~(\ref{eq:unfold}) and the equational  implication~(\ref{eq:induction}).
\end{repproposition}
\begin{proof}
Let $S\subseteq\C(\EE\cdot\FF + \G)\times\D(\C(\FF))$ be a probabilistic simulation from $(\FF,\rho)\cdot(\EE,\pi) + (\G,\eta)$ to $(\FF,\rho)$. Again, we will leave the set of distributions $\pi,\rho$ and $\eta$ implicit in this proof. By hypothesis, $\EE\cdot\FF + \G\refby \FF$ and $\EE*_0\G = \G\refby\FF$,  $\EE*_1\G = \G + \EE\cdot\G\refby\FF$ and simple induction shows that $\EE*_{\le i}\G\refby\FF$ and we denote such simulation by $S^{(i)}$. Moreover, since $\EE*_{\le i-1}\G\prefix\EE*_{\le i}\G$, we can find a family of simulations such that $S^{(i-1)}$ is the restriction of $S^{(i)}$ to $\EE*_{\le i-1}\G$. Therefore, we consider the reunion $S^* = \cup_i S^{(i)}$ and show that it is indeed a simulation from $\EE*\G$ to $\FF$.
\begin{itemize}
\item It is clear that $(\emptyset,\delta_\emptyset)\in S$.
\item Let $(x,\Theta)\in S$. Since $x$ and every $y$ configurations in $\supp(\Theta)$ are finite configurations, $(x,\Theta)\in S^{(i)}$ for some $i$ and we deduce that $x\refby_s y$.
\item Let $(x,\Theta)\in S$ and $x\prefix\Delta'$. Let $i$ be some integer such that $(x,\Theta)\in S^{(i)}$. There are two cases:
\begin{itemize}
\item if $\supp(\Delta')\subseteq \C(\EE*_{\le i} \G)$ then we are done because $S^{(i)}$ is a probabilistic simulation.
\item if $\supp(\Delta')\subseteq \C(\EE*_{\le i+1}\G)$ then we have $(x,\Theta)\in S^{(i+1)}$ because $S^{(i)}$ is the restriction of $S^{(i+1)}$ and we are done because $S^{(i+1)}$ is a simulation. 
\end{itemize}
\item Let $(x,\Theta)\in S$ and $x\cap\exit_{\EE*\G}$. Since $x$ is a finite set, $x\in \C(\EE*_{\le i}\G)$ for some $i$ and the result follows form $S^{(i)}$.\qed
\end{itemize}
\end{proof}

\end{document}